\newif\ifproof
\declaretheorem[name=Theorem]{thm}
\declaretheorem[name=Lemma]{lemma}
\declaretheorem[name=Definition]{defn}
\DeclareMathOperator*{\indist}{\stackrel{d}{\longrightarrow}}
\DeclareMathOperator*{\fasterthan}{o_P}
\DeclareMathOperator*{\fasterthandet}{o}
\DeclareMathOperator*{\boundeddet}{O}
\newcommand{\s}[1]{\mathscr{#1}}
\renewcommand{\d}[1]{\mathbb{#1}}
\newcommand{\n}[1]{\mathrm{#1}}
\newcommand{\pp}[1]{\Phi^- \circ \Phi{(#1)}}
\title{Nonparametric inference under a monotone hazard ratio order}
\author{Yujian Wu \and Ted Westling}
\date{Department of Mathematics and Statistics \\
University of Massachusetts Amherst \\
\today}
\begin{document}

\maketitle

\begin{abstract}
   The ratio of the hazard functions of two populations or two strata of a single population plays an important role in time-to-event analysis. Cox regression is commonly used to estimate the hazard ratio under the assumption that it is constant in time, which is known as the proportional hazards assumption. However, this assumption is often violated in practice, and when it is violated, the parameter estimated by Cox regression is difficult to interpret. The hazard ratio can be estimated in a nonparametric manner using smoothing, but smoothing-based estimators are sensitive to the selection of tuning parameters, and it is often difficult to perform valid inference with such estimators. In some cases, it is known that the hazard ratio function is monotone. In this article, we demonstrate that monotonicity of the hazard ratio function defines an invariant stochastic order, and we study the properties of this order. Furthermore, we introduce an estimator of the hazard ratio function under a monotonicity constraint. We demonstrate that our estimator converges in distribution to a mean-zero limit, and we use this result to construct asymptotically valid confidence intervals. Finally, we conduct numerical studies to assess the finite-sample behavior of our estimator, and we use our methods to estimate the hazard ratio of progression-free survival in pulmonary adenocarcinoma patients treated with gefitinib or carboplatin-paclitaxel.
\end{abstract}

\doublespacing

\section{Introduction}

\subsection{Background and literature review}

Time-to-event data are commonplace in many scientific fields, including biomedicine, economics, and engineering. In many circumstances, interest focuses on comparing the distribution of the time it takes for some event to occur, known as the event time, in two populations. For instance, in the medical sciences, patients may be randomly assigned to treatment or control, and followed until an event of interest occurs, such as onset, recurrence, or cure of a disease. In this case, the two populations are patients randomized to treatment and patients randomized to control. While the methods discussed in this paper are applicable to any time-to-event data, we will use ``patients" to refer to the units in the population of interest for convenience.

In the analysis of time-to-event data, one common parameter of interest is the cumulative distribution function of the event time, or equivalently, its survival function. However, in many settings, the event time is not observed for all patients in the study because, for example, some patients may prematurely leave the study, or the event may not have occurred before the end of the study period. This is known as right-censoring of the event time. If the censoring process is independent of the event process, the Kaplan-Meier estimator \citep{kaplan1958nonparametric} is a consistent nonparametric estimator of the survival function of the event time.

The distribution and survival functions describe cumulative probabilities, but in some cases it is of interest to quantify the instantaneous rate of the event at a point in time among patients who have not yet experienced the event of interest. This is known as the hazard rate. When comparing the distributions of an event time in two populations, the ratio of the hazard rates, known as the hazard ratio, describes the relative event rates among patients who have not yet experienced the event in the two populations over time. Estimating the hazard rate or ratio is more difficult than estimating the survival function because the hazard rate and ratio concern events occurring in an infinitesimal window of time. However, estimation of the hazard ratio is made much simpler by assuming that it is constant in time, which is known as the proportional hazards assumption. When this assumption holds, Cox proportional hazards regression can be used to estimate the hazard ratio \citep{cox1972regression}. In this case, the hazard ratio for comparing two populations reduces to a single number. 
The hazard ratio estimated from a simple Cox regression comparing two populations has become one of the most important tools in the analysis of time-to-event data, and in some studies it is the only effect reported \citep{hernan2010hazards}. 

Despite the widespread use of Cox regression, the proportional hazards assumption underlying it is easily violated. For example, if a treatment only offers short-term benefits over control, then the hazard ratio is unlikely to be constant \citep{li2015statistical}. In addition, the proportional hazards assumption implies that the survival function of one group can be expressed as the survival function of the other group raised to a constant power. Hence, if the survival curves cross, then the proportional hazards assumption cannot hold (see, e.g.\ \citealp{klein2003survival}). The hazard ratio estimated by a Cox regression in a setting where the proportional hazards assumption is violated is approximately a weighted average of the hazard ratio function over time \citep{struthers1986misspecified}. However, the weighting function depends on the censoring pattern in the study, which complicates the interpretation of the parameter estimated by the Cox model in such a misspecificed model \citep{o2008proportional, whitney2019comment}

When the proportional hazards assumption is violated, estimating the hazard ratio function is more difficult. One simple approach is to estimate the hazard ratio using the ratio of estimators of the individual hazard rate functions. For example, if correctly specified parametric models for the distributions are available, the hazard rates in the two distributions can be estimated using maximum likelihood estimation \citep{kalbfleisch2011statistical}. Alternatively, nonparametric methods for estimating hazard functions based on smoothing have also been proposed \citep{anderson1980smooth, Mller1994HazardRE, rebora2014bshazard}. However, estimators based on smoothing are often sensitive to the selection of certain tuning parameters, such as bandwidths, kernel functions, or the number of knots in a spline function. In addition, obtaining valid inference using a smoothing-based estimator can be challenging due to bias in the asymptotic distribution of the estimator (see, e.g.\  \citealp{wasserman2013all} and \citealp{calonico2018effect}).  

In some cases, it may be known that the hazard ratio is monotone as a function of time. In general, the hazard ratio can be expected to be monotone when the relative rate of events in the two groups increases or decreases over time. For example, if the effectiveness of  a treatment wanes over time, then the hazard ratio between treated and  placebo groups of a randomized trial may be expected to be monotone non-decreasing \citep{durham1998estimation}. Similarly, harmful exposures can result in a monotone non-decreasing hazard ratio between the exposed and unexposed groups \citep{sekula2013comprehensive}. We discuss the motivation and application of monotone hazard ratios more in Section~\ref{sec:monotone hr ordering}.

We are only aware of a few studies concerning monotonicity of the hazard ratio function. \cite{gill1987simple} and \cite{deshpande1995testing} proposed tests of the proportional hazards assumption against the non-decreasing hazard ratio alternative. \cite{kim2011bayesian} proposed an estimator of a monotone hazard ratio function using a nonparametric Bayesian approach, which we discuss further in Section~\ref{sec:estimation}.

\subsection{Contribution and organization of the article}

In this article, we study the situation in which the hazard ratio between two populations is known to be non-decreasing in time. First, we define a new stochastic order called the monotone hazard ratio order, demonstrate that it is an invariant stochastic order in the sense of \cite{lehmann1992invariant}, and study the properties of this novel stochastic order. As we will discuss more below, this is important because it gives stability to the monotonicity assumption, and because it connects our new order to the existing literature on stochastic orders. Second, we propose a novel estimator of a hazard ratio function under a monotonicity constraint in the presence of independent right-censoring. Finally, we derive the large-sample properties of our estimator, including convergence in distribution of our estimator at the rate $n^{-1/3}$ to a mean-zero limit, and use this result to construct asymptotically valid pointwise confidence intervals for the hazard ratio function. To the best of our knowledge, we are the first to study the stochastic order defined by monotonicity of the hazard ratio function, and we are also the first to produce asymptotically valid confidence intervals for a monotone hazard ratio function.

The paper proceeds as follows. In Section~\ref{sec:monotone hr ordering}, we define the monotone hazard ratio order and establish properties of this order. In Section~\ref{sec:estimation}, we introduce our nonparametric estimator of a monotone hazard ratio function, establish asymptotic theory of our estimator, and use this theory to construct confidence intervals. In Section~\ref{sec:simulation}, we present numerical studies evaluating the finite-sample performance of our method. Finally, in Section~\ref{sec:application}, we use our method to estimate the hazard ratio function comparing the length of progression-free survival of pulmonary adenocarcinoma patients treated with gefitinib or carboplatin-paclitaxel. Proofs of all theorems can be found in Supplementary Material.

\subsection{Notation}

For a function $H$ on a domain $\s{I} \subseteq \d{R}$ to the extended real line $\bar{\d{R}}$, we let $\bar{H} := 1-H$. If $H$ possesses limits from the left, then we let $H_- := x \mapsto H(u-) := \lim_{u\uparrow x} H(u)$ be the left-continuous version of $H$, and if $H$ possesses limits from the right, then we let $H_+ := x \mapsto H(x+) := \lim_{u\downarrow x} H(u)$ be the right-continuous version of $H$. We set $\Delta H := H_+ - H_-$. If $H$ is left-differentiable at $x \in \s{I}$, we denote by $\partial_- H(x)$ the left derivative of $H$ at $x$. We also denote the image of $H$ by $\n{Im}(H) := \{u \in \d{R} : H(x) = u$ for some $x \in \s{I}\}$. If $H$ is non-decreasing, we define the support of $H$ as $\n{Supp}(H) := \{x \in \s{I} : H(u) < H(v)$ for all $u <  x < v\}$.  We define the \emph{greatest convex minorant} (GCM) of $H$ on $\s{I}$, denoted $\n{GCM}_{\s{I}}(H) : \s{I} \to\bar{\d{R}}$,  as the pointwise supremum of all convex functions on $\s{I}$ bounded above by $H$. We say that $H$ is \emph{monotone on} $\s{A} \subseteq \s{I}$ if $H(x) \leq H(y)$ for all $x < y$ with $x, y \in \s{A}$, and similarly we say that $H$ is \emph{convex on} $\s{A}$ if $H(tx + (1-t)y) \leq tH(x) + (1-t)H(y)$ for all $x,y \in \s{A}$ and $t \in [0,1]$ such that $tx + (1-t)y \in \s{A}$ as well. We set $H^-(u) := \inf\{t \leq u : H(t) \geq H(u)\}$ as the generalized inverse function corresponding to $H$. The properties of such functions when $H$ is a distribution function (in which case $H^-$ is its quantile function) are summarized in Chapter~21 of \cite{van2000asymptotic}. All integrals should be interpreted as Riemann-Stieltjes integrals, and $\int_0^t := \int_{(0,t]}$ by default.

\section{Monotone hazard ratio ordering \label{sec:monotone hr ordering}}

\subsection{Definition of the monotone hazard ratio order}

We now introduce and motivate the monotone hazard ratio order. We let $S$ and $T$ be nonnegative random variables, and we let $F_S$, $\bar{F}_S$, $F_T$, and $\bar{F}_T$ be the distribution and survival functions corresponding to $S$ and $T$, respectively. If $S$ and $T$ are absolutely continuous with density functions $f_S = F'_S$ and $f_T = F'_T$, then $\lambda_S := f_S / \bar{F}_S$ and $\lambda_T := f_T / \bar{F}_T$ are the hazard functions corresponding to $S$ and $T$, respectively. In this case, we say $S \geq_{MHR} T$ if  $t \mapsto \theta(t) := \lambda_S(t) / \lambda_T(t)$ is non-decreasing for $t$ such that $f_T(t) > 0$ or $f_S(t) > 0$. On the other hand, if $S$ and $T$ are fully discrete random variables with support contained on a finite or countably infinite set $\{t_1 < t_2 < \cdots \}$, then $\lambda_S(t_j) := f_S(t_j) / \bar{F}_S(t_{j-1})$ and $\lambda_T(t_j) := f_T(t_j) / \bar{F}_T(t_{j-1})$ are the corresponding hazard functions, where $f_S(t) := P(S =t)$ and $f_T(t) := P(T = t)$ are the corresponding mass functions (and where $t_0 := -\infty$). In this case, we say $S \geq_{MHR} T$ if $t \mapsto \theta(t) := \lambda_S(t) / \lambda_T(t)$ is non-decreasing for all $t \in \{t_1, t_2, \dots\}$ such that $f_T(t) > 0$ or $f_S(t) > 0$.

We define $\geq_{MHR}$ in such a way that encompasses both the above cases, as well as more complicated cases where $S$ and $T$ may be mixed discrete-continuous random variables. We let $\mu$ be any sigma-finite measure dominating both $F_S$ and $F_T$, and we define $f_S := dF_S / d\mu$ and $f_T := dF_T / d\mu$. We then define the hazard functions relative to $\mu$ as $\lambda_S := f_S / \bar{F}_{S,-}$ on the support of $f_S$, and 0 otherwise, and similarly for $\lambda_T$. The hazard ratio function $\theta : \n{Supp}(F_S) \cup \n{Supp}(F_T) \to [0, \infty]$  is then defined as $\theta := \lambda_S / \lambda_T$. We note that $\theta$ does not depend on the choice of dominating measure $\mu$, that $\theta = 0$ on $\n{Supp}(F_T) \backslash \n{Supp}(F_S)$, and that $\theta = +\infty$ on $\n{Supp}(F_S) \backslash \n{Supp}(F_T)$. We then have the following general definition of the monotone hazard ratio relation.
\begin{defn}
We say that $S \geq_{MHR} T$ if $\theta = \lambda_S / \lambda_T$ is non-decreasing on $\n{Supp}(F_S) \cup \n{Supp}(F_T)$.
\end{defn}
When both $S$ and $T$ are dominated by Lebesgue measure, we recover the first case discussed above, and when they are both dominated by counting measure on the countable set $\{t_1 < t_2 < \cdots \}$, then we recover the second case. 

Monotone hazard ratios abound in the literature because monotonicity of hazard ratio function can be expected to hold in several general situations. First, if $S$ is the time to an adverse event under treatment and $T$ is the same under control, we can expect $S \geq_{MHR} T$ if the protective effect of the treatment on those who have not yet experienced it wanes over time. There are many examples of such treatments, including vaccines \citep{durham1998estimation} and blood transfusion \citep{holcomb2013prospective}. Second, if $S$ is the time to an adverse event under control, and $T$ is the same under exposure to a condition with short-term toxic effects, then we may again expect that $S \geq_{MHR} T$. Drug overdose is an example of such a toxic exposure \citep{hernandez2018exposure}. We note that the individual hazard functions of $S$ and $T$ may not be monotone in the above cases. For instance, there may be underlying time trends (e.g., weekly, monthly, or seasonal trends) unrelated to treatment that induce non-monotonic trends in the hazards. If these trends influence the hazards of $S$ and $T$ equally, then the hazard ratio may still be expected to be monotone.

The statistical model induced by the monotone hazard ratio order is a generalization of the popular proportional hazards model with a time trend, where the time trend is allowed to be any monotone function. Choosing a specific time trend for a proportional hazards model can be difficult, and if the time trend is chosen based on the data, obtaining valid inference for the regression coefficient is challenging \citep{desquilbet2005time}. Hence, the flexibility in permitting any monotone time trend is appealing because it avoids the need to choose a specific trend. 

We will see in Section~\ref{sec:estimation} that when it is known that the hazard ratio is monotone, this knowledge can be exploited to obtain a simple nonparametric estimator of the hazard ratio function and asymptotically valid pointwise inference. Furthermore, the estimator and inferential procedure avoid estimating or modelling the individual hazard functions directly, in the same spirit as the proportional hazards estimator, which yields improved robustness over methods that estimate the hazard functions. This will be explored more in numerical studies in Section~\ref{sec:simulation}.


\subsection{Properties of the monotone hazard ratio order}

We now establish several important properties of the monotone hazard ratio order. First, we show that the relation defined above is an invariant stochastic order in the sense of \cite{lehmann1992invariant}. Intuitively, stochastic orders are ways of defining what it means for one probability distribution to be ``larger" than another. Specifically, a stochastic order $\geq_S$ is a relation on the space of probability distributions on some measurable space satisfying the conditions of a preorder: for any probability distributions $F$, $G$, and $H$ on the space, (1) $F \geq_S F$, and (2) $G \geq_S F$ and $H \geq_S G$ implies that $H \geq_S F$. We will be focused on distributions on the reals. In this case, a stochastic order is \emph{invariant under monotone transformations}, or simply \emph{invariant}, if $G \geq_S F$ implies $G \circ \psi^{-1}  \geq_S F \circ \psi^{-1}$ for any strictly increasing continuous function $\psi : \d{R} \to \d{R}$ with $\lim_{x \to -\infty} \psi(x) = -\infty$ and $\lim_{x \to \infty} \psi(x) = \infty$. For two real-valued random variables $S$ and $T$ with distribution functions $F_S$ and $F_T$, we say $S \geq_{S} T$ for a stochastic order $\geq_{S}$ if $F_S \geq_{S} F_T$. We now show that these properties hold for the monotone hazard ratio order defined above.

\begin{thm}\label{thm:invariant}
(1) For any random variable $S$, $S \geq_{MHR} S$; (2) for any $S$, $T$, and $U$ such that $S \geq_{MHR} T$ and $T \geq_{MHR} U$, it holds that $S \geq_{MHR} U$; and (3) for any strictly increasing continuous function $\psi$ with $\lim_{x \to -\infty} \psi(x) = -\infty$ and $\lim_{x \to +\infty} \psi(x) = +\infty$, $S \geq_{MHR} T$ implies that $\psi(S) \geq_{MHR} \psi(T)$.
\end{thm}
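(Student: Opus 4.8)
The plan is to handle the three claims separately, with the reflexivity claim being immediate and the transitivity and invariance claims requiring the explicit construction of the hazard ratio function. For claim (1), reflexivity, observe that with $S = T$ we may take any common dominating measure $\mu$, so that $f_S = f_T$ $\mu$-a.e., hence $\lambda_S = \lambda_T$ on $\n{Supp}(F_S)$ and $\theta \equiv 1$, which is trivially non-decreasing. So reflexivity is essentially definitional.

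For claim (2), transitivity, the key observation is multiplicativity of the hazard ratio. Given $S \geq_{MHR} T$ and $T \geq_{MHR} U$, choose a single sigma-finite measure $\mu$ dominating all of $F_S$, $F_T$, $F_U$ (e.g.\ $F_S + F_T + F_U$). Write $\theta_{ST} := \lambda_S/\lambda_T$, $\theta_{TU} := \lambda_T/\lambda_U$, and $\theta_{SU} := \lambda_S/\lambda_U$. On the set where all three densities are positive we have the pointwise identity $\theta_{SU} = \theta_{ST}\cdot\theta_{TU}$, a product of two non-negative non-decreasing functions, hence non-decreasing there. The care needed is on the boundary pieces of $\n{Supp}(F_S)\cup\n{Supp}(F_U)$ where one of the supports does not overlap: using the conventions stated in the paper ($\theta = 0$ on $\n{Supp}(F_T)\setminus\n{Supp}(F_S)$ and $\theta = +\infty$ on $\n{Supp}(F_S)\setminus\n{Supp}(F_T)$), I would check that the ordering of the supports forced by $S \geq_{MHR} T$ and $T \geq_{MHR} U$ (roughly, that $\n{Supp}(F_S)$ lies ``to the right'' of $\n{Supp}(F_T)$, which lies to the right of $\n{Supp}(F_U)$, in the appropriate sense) makes $\theta_{SU}$ non-decreasing globally with values in $[0,+\infty]$. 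This bookkeeping of support geometry, rather than any analytic difficulty, is where the argument needs to be careful; it may be cleanest to first prove a lemma that $S \geq_{MHR} T$ implies $\sup\n{Supp}(F_T) \le$ or relates appropriately to $\inf$ of the region where $\theta > 1$, etc.

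For claim (3), invariance under $\psi$, I would compute the distribution of $\psi(S)$ directly. Since $\psi$ is a strictly increasing continuous bijection of $\d{R}$, we have $F_{\psi(S)} = F_S \circ \psi^{-1}$ and likewise for $T$, and $\n{Supp}(F_{\psi(S)}) = \psi(\n{Supp}(F_S))$. Taking $\mu$ to dominate $F_S, F_T$, the pushforward $\mu\circ\psi^{-1}$ dominates $F_{\psi(S)}, F_{\psi(T)}$, and the change-of-variables/Radon--Nikodym calculation gives $f_{\psi(S)}(\psi(t)) \,d(\mu\circ\psi^{-1}) = f_S(t)\,d\mu$, so the Radon--Nikodym derivatives satisfy $f_{\psi(S)}\circ\psi = f_S$ and similarly for survival functions, $\bar F_{\psi(S),-}\circ\psi = \bar F_{S,-}$ (using left-continuity and that $\psi$ is a continuous increasing bijection, so it commutes with left limits). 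Consequently $\lambda_{\psi(S)}\circ\psi = \lambda_S$, $\lambda_{\psi(T)}\circ\psi = \lambda_T$, and therefore $\theta_{\psi(S),\psi(T)} = \theta_{S,T}\circ\psi^{-1}$. Since $\theta_{S,T}$ is non-decreasing on $\n{Supp}(F_S)\cup\n{Supp}(F_T)$ and $\psi^{-1}$ is increasing, the composition is non-decreasing on $\psi(\n{Supp}(F_S)\cup\n{Supp}(F_T)) = \n{Supp}(F_{\psi(S)})\cup\n{Supp}(F_{\psi(T)})$, which is exactly $\psi(S)\geq_{MHR}\psi(T)$.

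The main obstacle I anticipate is not any single computation but rather making the support-geometry arguments in (2) fully rigorous — in particular ruling out pathological configurations of $\n{Supp}(F_S), \n{Supp}(F_T), \n{Supp}(F_U)$ and verifying that the conventions $\theta \in \{0, +\infty\}$ on the non-overlap regions are consistent with monotonicity being inherited. For (3), the only subtlety is making sure the left-continuous survival function transforms correctly under $\psi$, which uses continuity of $\psi$ crucially (a discontinuous increasing map would not commute with left limits), but this is routine once stated carefully.
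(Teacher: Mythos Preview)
Your proposal is correct and follows essentially the same route as the paper: reflexivity via $\theta\equiv 1$, transitivity via the multiplicative identity $\theta_{SU}=\theta_{ST}\theta_{TU}$ on the common support together with a case analysis of the boundary regions where some support is absent, and invariance via the computation $\theta_{\psi(S),\psi(T)}=\theta_{S,T}\circ\psi^{-1}$ using the pushforward dominating measure $\mu\circ\psi^{-1}$. The paper's proof of (2) carries out exactly the support bookkeeping you anticipate, reducing to the single nontrivial configuration $x\in\n{Supp}(F_S)\setminus(\n{Supp}(F_T)\cup\n{Supp}(F_U))$, $y\in\n{Supp}(F_U)\setminus(\n{Supp}(F_S)\cup\n{Supp}(F_T))$ with $x<y$, and rules it out by showing it would force $\n{Supp}(F_T)$ to lie simultaneously to the left of $x$ and to the right of $y$.
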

The fact that the monotone hazard ratio forms a stochastic order is important due to the stability it provides when comparing the hazard ratios of multiple event times. The fact that it is invariant to monotone transformations is especially important because it means that the order is independent of time scale. We also note that $S \geq_{MHR} T$ and $T \geq_{MHR} S$ implies that the hazard ratio is constant, but does not imply that $S = T$ in distribution. Hence, the monotone hazard ratio order is not antisymmetric, and therefore does not induce a partial order.

We now provide two characterizations of the monotone hazard ratio order in the special case where $F_S \ll F_T$, i.e.\ $F_S$ is dominated by $F_T$.  We define $\Lambda_S(t) := \int_0^t F_S(du) / \bar{F}_{S_-}(u)$ and $\Lambda_T(t) :=\int_0^t F_T(du) / \bar{F}_{T_-}(u)$ as the cumulative hazard functions corresponding to $S$ and $T$, respectively, and we note that if $F_S \ll F_T$, then $\Lambda_S \ll \Lambda_T$, and $\theta = d\Lambda_S / d\Lambda_T$. We also define $R = F_S \circ F_T^-$ as the \emph{ordinal dominance curve} corresponding to the distributions of $S$ and $T$. \cite{lehmann1992invariant} demonstrated that all invariant stochastic orders are equivalent to a pre-order on the space of ordinal dominance curves that is closed under composition. In the next result, we provide two characterizations of the monotone hazard ratio order: one in terms of the ordinal dominance curve, and a second in terms of the cumulative hazard functions $\Lambda_S$ and $\Lambda_T$.
\begin{thm}\label{thm:mhr_characterization}
(a) If $F_S \ll F_T$ and $\theta$ is continuous, then the following are equivalent:
\begin{enumerate}
    \item  $S \geq_{MHR} T$;
    \item $u \mapsto  \int_{[0,u)} (1-v) / \bar{R}(v)\, dR_+(v) = \int_{-\infty}^{F_T^-(u)} (d\Lambda_S / d\Lambda_T)(t) \, dF_T(t)$ is convex on $\n{Im}(F_T)$;
    \item $u \mapsto \Lambda_S \circ \Lambda_T^-(u)$ is convex on $\n{Im}(\Lambda_T)$.
\end{enumerate}
(b) If $F_S \ll F_T$, $\theta$ is continuous, and $S \geq_{MHR} T$, then $\theta(t) = \partial_- \n{GCM}_{I}(\Lambda_S \circ \Lambda_T^-) \circ \Lambda_T(t)$ for any $t \in \n{Supp}(F_T)$, where $I$ is the smallest closed interval containing $\n{Im}(\Lambda_T)$.
\end{thm}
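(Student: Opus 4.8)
The plan is to reduce all three statements in part (a) to a single fact about reparametrized monotone functions: if $A$ and $B$ are non-decreasing (and finite on compacts) with $A \ll B$ as measures and Radon--Nikodym derivative $g := dA/dB$, then $A \circ B^-$ is convex on $\n{Im}(B)$ if and only if $g$ is non-decreasing on $\n{Supp}(B)$. The engine is the observation that, using $B^-(B(t)) = t$ and $(A \circ B^-)(B(t)) = A(t)$ for $t \in \n{Supp}(B)$, every chord slope of $A \circ B^-$ between points of $\n{Im}(B)$ is a $dB$-weighted average of $g$:
\[
\frac{(A\circ B^-)(B(t_2)) - (A\circ B^-)(B(t_1))}{B(t_2) - B(t_1)} = \frac{A(t_2) - A(t_1)}{B(t_2) - B(t_1)} = \frac{\int_{(t_1,t_2]} g\, dB}{\int_{(t_1,t_2]} dB}, \qquad t_1 < t_2 \in \n{Supp}(B).
\]
If $g$ is non-decreasing then for $t_1 < t_2 < t_3$ the average over $(t_1,t_2]$ is at most $g(t_2)$ and the average over $(t_2,t_3]$ is at least $g(t_2)$, which is exactly the three-slope convexity inequality; convexity then passes from $\n{Im}(B)$ to its closed hull by filling in the gaps linearly. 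Conversely, a violation $g(s_1) > g(s_2)$ with $s_1 < s_2$ in $\n{Supp}(B)$, together with continuity of $g$, is used to produce a triple of support points at which the weighted-average comparison, hence the convexity inequality, fails.

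With this fact in hand, statement (3) is the instance $(A,B) = (\Lambda_S, \Lambda_T)$, using $d\Lambda_S/d\Lambda_T = \theta$ (which holds since $F_S \ll F_T$ implies $\Lambda_S \ll \Lambda_T$, as noted in the text), $\n{Supp}(\Lambda_T) = \n{Supp}(F_T)$, and $\n{Supp}(F_S) \subseteq \n{Supp}(F_T)$ under $F_S \ll F_T$, so that ``$\theta$ non-decreasing on $\n{Supp}(\Lambda_T)$'' is precisely $S \geq_{MHR} T$. Statement (2) is the instance $(A,B) = \bigl(\int_{(-\infty,\cdot]} \theta\, dF_T,\, F_T\bigr)$, whose reparametrization is $u \mapsto \int_{-\infty}^{F_T^-(u)} \theta\, dF_T$; separately I would establish the asserted identity $\int_{[0,u)} (1-v)/\bar{R}(v)\, dR_+(v) = \int_{-\infty}^{F_T^-(u)} \theta\, dF_T$ by the change of variables $v = F_T(t)$ in the Riemann--Stieltjes integral, using the elementary ordinal-dominance relations $R \circ F_T = F_S$ (off the gaps), $\bar{R}(F_T(t)) = \bar{F}_S(t)$, and $\theta = \frac{dF_S}{dF_T} \cdot \frac{\bar{F}_{T,-}}{\bar{F}_{S,-}}$ on $\n{Supp}(F_T)$.

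Part (b) is then a short deduction from part (a): since (a) gives that $\Lambda_S \circ \Lambda_T^-$ is convex on $\n{Im}(\Lambda_T)$, its greatest convex minorant on $I$ coincides with it at every point of $\n{Im}(\Lambda_T)$, and its left derivative at $\Lambda_T(t)$ for $t \in \n{Supp}(F_T)$ is the incoming chord slope $\lim_{t' \uparrow t,\, t' \in \n{Supp}(F_T)} \frac{\Lambda_S(t) - \Lambda_S(t')}{\Lambda_T(t) - \Lambda_T(t')}$ --- an average of $\theta$ over intervals shrinking to $t$, or at an atom of $F_T$ the ratio $\Delta\Lambda_S(t)/\Delta\Lambda_T(t)$ --- which equals $\theta(t)$ by continuity of $\theta$; composing with $\Lambda_T$ gives the claimed formula.

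The main obstacle I anticipate is not the continuous case, where everything collapses to the one-line computation that the derivative of $\Lambda_S \circ \Lambda_T^-$ is $\theta \circ \Lambda_T^-$ and ``convex $\iff$ derivative non-decreasing,'' but rather the careful bookkeeping when $F_S$ and $F_T$ have atoms or disconnected supports: verifying that $B^- \circ B$ and $B \circ B^-$ behave as claimed on $\n{Supp}(B)$ and $\n{Im}(B)$, controlling the chord slopes of $A \circ B^-$ across the gaps of $\n{Im}(B)$ created by atoms, identifying the left derivative of the greatest convex minorant at such gap endpoints with the jump ratio $\Delta\Lambda_S/\Delta\Lambda_T$, and carrying out the change-of-variables step for the identity in (2) --- with the half-open intervals $[0,u)$ and the right-continuous version $R_+$ --- simultaneously through the flats and jumps of $F_T$, for which the generalized-inverse calculus of, e.g., Chapter~21 of \cite{van2000asymptotic} is the natural tool.
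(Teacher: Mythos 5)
Your proposal is correct and takes essentially the same approach as the paper: the paper's proof also isolates exactly your ``engine'' as a standalone lemma (their Lemma~1, with $\Gamma$, $\Phi$, $r$ in place of your $A$, $B$, $g$, and with the same continuity hypothesis on $r$), proves both directions via comparison of chord slopes of $\Gamma\circ\Phi^-$ under the $\Phi$-reparametrization, applies it with $(\Lambda_S,\Lambda_T,\theta)$ for (1)$\iff$(3) and part (b), applies it with $(\int\theta\,dF_T, F_T, \theta)$ for (1)$\iff$(2), and establishes the identity $\int_{[0,u)}(1-v)/\bar{R}(v)\,dR_+(v)=\int_{-\infty}^{F_T^-(u)}\theta\,dF_T$ by the same change of variables, splitting into continuous and atomic parts. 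The heavy lifting you flag at the end --- the flats/atoms case analysis in the converse of the lemma, identifying the interpolant as the GCM, and tracking $F_{T,+}^-$ through the jumps in the ODC identity --- is indeed where the paper spends nearly all of its effort, and your sketch correctly identifies both the structure and the tools.
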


The assumption that $F_S \ll F_T$ is important for the characterizations in Theorem~\ref{thm:mhr_characterization}. For example, if $F_S$ is the uniform distribution on $[0,1.5]$ and $F_T$ is the Bernoulli distribution with probability $1/2$, then $F_S$ is not dominated by $F_T$, but $\Lambda_S \circ \Lambda_T^-$ is convex on $\n{Im}(\Lambda_T)$ and $S \ngeq_{MHR} T$. This is similar to a counterexample provided in \cite{mosching2020estimation} for the likelihood ratio order. If treatment or exposure does not change the set of possible event times, which is the case in many real-world situations, then $F_S \ll F_T$ can be expected to hold.

The characterization of the monotone hazard ratio order in terms of the ordinal dominance curve provided in Theorem~\ref{thm:mhr_characterization} is somewhat more complicated than the characterization of the other three common invariant stochastic orders discussed below. This is due to the complexity of the general relationship between a hazard function and the corresponding distribution function. In the case of absolutely continuous $F_S$ and $F_T$, the characterization in terms of the ordinal dominance curve can be stated somewhat simpler. In particular, $\int_0^u (1-v) / \bar{R}(v) \, dR_+(v)$ is convex if and only if $v \mapsto (1-v) R'(v) / \bar{R}(v)$ is monotone, which holds if and only if $t \mapsto e^{-t} R'(1 - e^{-t}) / \bar{R}(1-e^{-t})$ is monotone. Then, since $\int_0^t e^{-s} R'(1 - e^{-s}) / \bar{R}(1-e^{-s}) \, ds = -\log \bar{R}(1 - e^{-t})$, in the absolutely continuous case the monotone hazard ratio order is equivalent to $t\mapsto R(1-e^{-t})$ being log-convex on $t \in [0, \infty)$. In this case, it is necessary but not sufficient that $R$ be log-convex.

The relationship between the hazard and cumulative hazard functions is analogous to that between a density and distribution function. Hence, the characterization of the monotone hazard ratio order in terms of the cumulative hazard functions provided in  Theorem~\ref{thm:mhr_characterization}  parallels the characterization of a likelihood ratio order in terms of the distribution functions \citep{westling2019nonparametric, mosching2020estimation}. We will see in Section~\ref{sec:estimation} that part (b) of Theorem~\ref{thm:mhr_characterization} suggests a natural estimator of $\theta$.

Theorem~\ref{thm:mhr_characterization} can also be used to informally assess the plausibility of the monotone hazard ratio order given data. We note that $\Lambda_S \circ \Lambda_T^-$ is convex on $\n{Im}(\Lambda_T)$ if and only if the parametrized curve $\{(\Lambda_T(t), \Lambda_S(t)) : t \in \n{Supp}(F_T)\}$ is convex on $\n{Supp}(F_T)$. Hence, if $\Lambda_{S,n}$ and $\Lambda_{T,n}$ are consistent estimators of $\Lambda_S$ and $\Lambda_T$, respectively, then $S \geq_{MRH} T$ if and only if $\{(\Lambda_{T,n}(t), \Lambda_{S,n}(t)) : t \in \n{Supp}(\Lambda_{T,n})\}$ is consistent for a convex function. Hence, comparing this curve to its GCM gives an informal graphical check of the monotone hazard ratio order. This same procedure was proposed by \cite{gill1987simple}.

\subsection{Relationship to other stochastic orders}

A variety of stochastic orders have been studied;  \cite{shaked2007stochastic} contains detailed results and discussion. We briefly review three of the most common stochastic orders used in the context of univariate time-to-event analysis. The \emph{usual} or \emph{uniform} stochastic order is defined as $G \geq_{ST} F$ if $G(t) \leq F(t)$ for all $t \in \d{R}$, where $F$ and $G$ are cumulative distribution functions on $\d{R}$. Estimators under the usual stochastic order were developed by \cite{brunk1966maximum} and \cite{dykstra1982maximum}, and the corresponding asymptotic properties were derived by \cite{praestgaard1996asymptotic}. The \emph{hazard rate} order is defined as $G \geq_{HR} F$ if  $\bar{F} / \bar{G}$ is non-increasing, which is equivalent to $f / \bar{F} \geq g / \bar{G}$ in the case of absolutely continuous distributions, where $f$ and $g$ are the densities corresponding to $F$ and $G$. \cite{dykstra1991statistical} studied estimation and inference under a hazard rate order.  Finally, the \emph{likelihood ratio} order is defined as $G \geq_{LR} F$ if $g/f$ is non-decreasing. \cite{dykstra1995inference, yu2017density, mosching2020estimation} and \citet{westling2019nonparametric} considered estimation and inference under a likelihood ratio order. These three canonical examples of stochastic orders are themselves ordered: $G \geq_{LR} F$ implies $G \geq_{HR} F$ implies $G \geq_{ST} F$.

It is natural to ask where the monotone hazard ratio order fits into the hierarchy of the three common stochastic orders. It turns out that the monotone hazard ratio order does not generally imply, nor is it implied by, any of the three common stochastic orders. To show this, we provide continuous and discrete counterexamples for each case. These examples are illustrated in Figure~\ref{Fig: orders_relationships}. We note that the fact that our order is not implied by nor implies these other orders means in particular that previously established properties of and methods for inference under these orders do not apply to the monotone hazard ratio order.

We first show that the monotone hazard ratio order does not imply the usual stochastic order, which further implies that the monotone hazard ratio order does not imply the hazard rate or monotone likelihood ratio orders. Suppose that $S$ and $T$ have Weibull distributions with shape parameters $k_S$ and $k_T$ and scale parameters $\sigma_S$ and $\sigma_T$, respectively. Then the hazard ratio function $\theta (t) = \lambda_S(t) / \lambda_T(t) $ is proportional to $t^{k_S - k_T}$ for $t > 0$, so that $S \geq_{MHR} T$ if and only if $k_S \geq k_T$, and $S \leq_{MHR} T$ if and only if $k_S \leq k_T$. On the other hand, $F_S(t) \leq F_T(t)$ if and only if $t^{k_S - k_T} \leq \sigma_S^{k_S} / \sigma_T^{k_T}$. If $k_S \neq k_T$, then $t \mapsto t^{k_S - k_T}$ ranges from $0$ to $\infty$, which implies that it cannot be the case that either $S \geq_{ST} T$ or $T \geq_{ST} S$. Therefore, if $k_S > k_T$, then $S \geq_{MHR} T$, but $S \ngeq_{ST} T$, which also implies that $S \ngeq_{HR} T$ and $S \ngeq_{LR} T$. Hence, the monotone hazard ratio order does not imply any of these other three common orders in the continuous case (first column of Figure~\ref{Fig: orders_relationships}).
For a counterexample in the discrete case, suppose that $F_S$ follows a geometric distribution with success probability $p_S$ on $\{1,2,\dots\}$, so that $\lambda_S(k) = p_S$ for all $k \in \{1,2,\dots\}$. Hence, $S \geq_{MHR} T$ for any $T$ supported on $\{1,2,\dots\}$ such that $\lambda_T(k)$ is non-increasing in $k$. The usual stochastic order fails to hold if $\bar{F}_S(k) = (1-p_S)^k < \bar{F}_T(k) = \prod_{j=1}^k [1 - \lambda_{T}(j)]$ for any $k \in \{1,2,\dots\}$. Both of these are the case, for instance, if $T$ also follows a geometric distribution with success probability $p_T < p_S$ (second column of Figure~\ref{Fig: orders_relationships}).

We now show that the likelihood ratio order does not imply the monotone hazard ratio order, which further implies that the hazard rate order and usual stochastic order do not imply the monotone hazard ratio order. For an example in the continuous case, suppose that $S$ and $T$ follow Beta distributions with parameters $(\alpha, \beta_S)$ and $(\alpha, \beta_T)$ for $\beta_S < \beta_T$. Then the density ratio is proportional to $(1-t)^{\beta_S - \beta_T}$, which is strictly increasing, so $S \geq_{LR} T$. Furthermore, if $\alpha \in (0,1)$, then one can also show that the hazard ratio function is strictly decreasing, so that $S <_{MHR} T$. Therefore, the likelihood ratio order does not imply the monotone hazard ratio order in the continuous case, so neither do the hazard rate or usual stochastic orders (third column of Figure~\ref{Fig: orders_relationships}). For a counterexample in the discrete case, suppose $S$ has a uniform distribution on $\{t_1 < \cdots < t_K\}$ for $K > 1$ and $T$ satisfies (1) $f_T(t_j) \geq f_T(t_{j+1})$ for $j = 1, \dotsc, K-1$, and (2) $f_T(t_j) > (K-j+1) f_T(t_{j-1}) [1 - f_T(t_{j-1})] / (K-j)$ for all $j = 2, \dots, K-1$. Both (1) and (2) can be achieved simultaneously if and only if $f_T(t_1) \geq 1/K$. Then the ratio of the mass functions is proportional to $f_T$, so the likelihood ratio order holds by assumption (1). However, we can also show that $\lambda_S(t_{j-1}) / \lambda_T(t_{j-1}) > \lambda_S(t_{j}) / \lambda_T(t_{j})$ for all $j = 2, \dots, K - 1$. So the monotone hazard ratio order cannot hold (last column of Figure~\ref{Fig: orders_relationships}). 

\begin{figure}[t!]
    \centering
    \includegraphics[width = 1\textwidth]{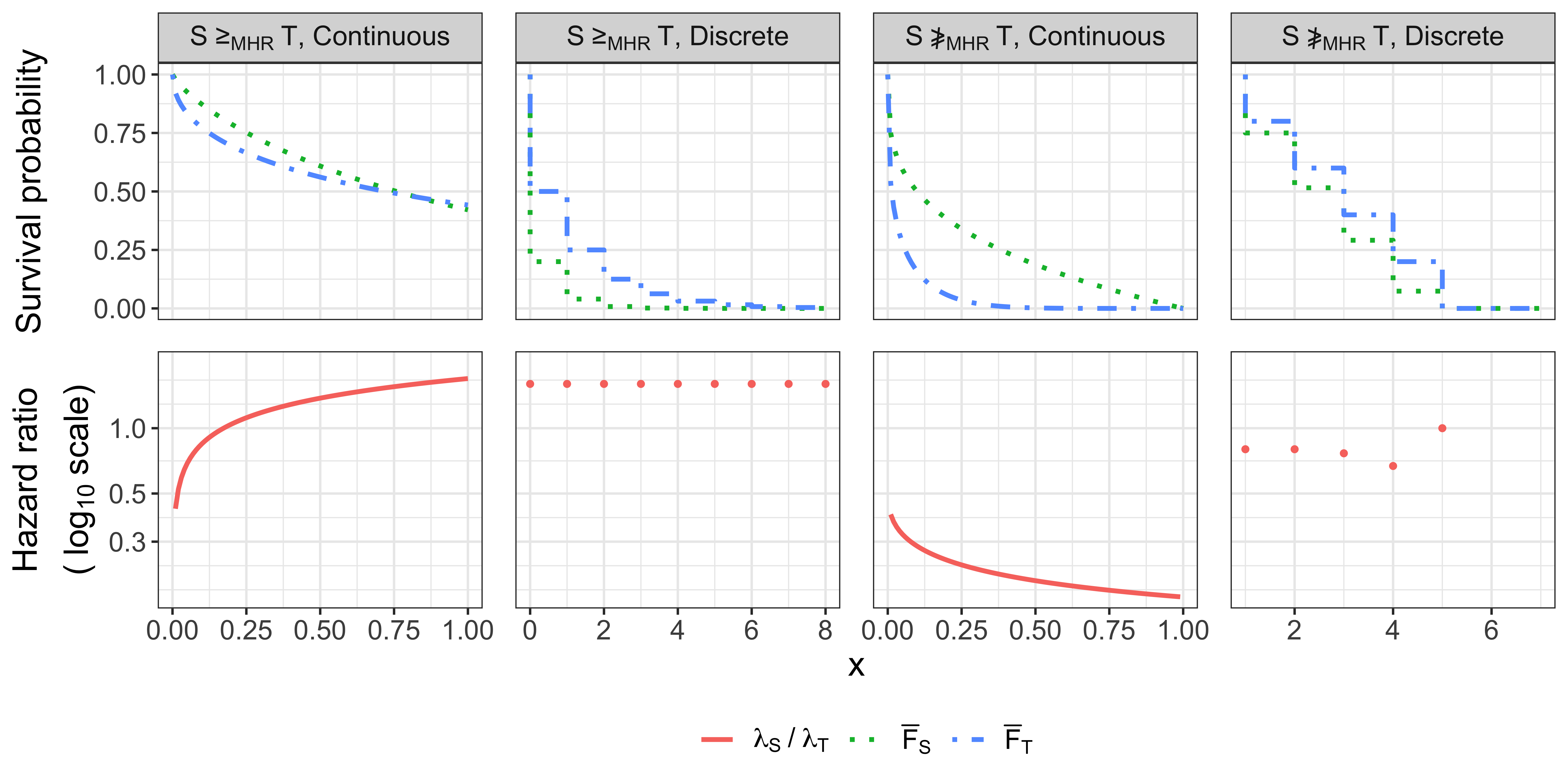}
    \caption{The relationship of $\geq_{MHR}$ to other stochastic orders. The upper row shows the survival functions $\bar{F}_S$ and $\bar{F}_T$ while the lower row shows the hazard ratio function $\lambda_S / \lambda_T$. Column 1: $S \sim \n{Weibull}(0.8, 1.2)$, $T \sim \n{Weibull}(0.5, 1.5)$. Column 2: $S \sim \n{Geometric}(0.8)$, $T \sim \n{Geometric}(0.5)$. Column 3: $S \sim \n{Beta}(0.3, 1)$, $T \sim \n{Beta}(0.3, 6)$. Column 4: $S \sim \n{Uniform}\{1,2,\dotsc,5\}$, $T \sim$ non-increasing discrete distribution defined in the text.}
    \label{Fig: orders_relationships}
\end{figure}

One special case where the monotone hazard ratio order does imply the hazard rate order, and therefore the usual order as well, is when $\lim_{t \to t_{\n{max}}} \lambda_S(t) / \lambda_T(t) \leq 1$, where $t_{\n{max}} := \sup\{ \n{Supp}(F_S) \cup \n{Supp}(F_T)\}$. This is the case, for instance,  when a treatment is known to be non-toxic, or when a harmful exposure is known to never be beneficial.  In particular, if $F_S$ and $F_T$ are supported on the same finite discrete set $\{t_1 < t_2 <\cdots < t_K\}$ and $f_S(t_K) > 0$ and $f_T(t_K) > 0$, then necessarily $\lambda_S(t_K) = \lambda_T(t_K) = 1$ so $S \geq_{MHR} T$ implies $S \geq_{HR} T$.

\section{Nonparametric inference with right-censored data}\label{sec:estimation}

\subsection{Statistical setting}

In this section, we provide an estimator of a monotone hazard ratio function $\theta$ using independently right-censored data. We derive the asymptotic distribution of our estimator, and use this result to construct asymptotically valid pointwise confidence intervals for $\theta$. 

For each $i \in \{1,\dotsc,n\}$, we let $A_i \sim \n{Bernoulli}(\pi)$ indicate the cohort for unit $i$. For a randomized study, $A_i = 0$ corresponds to control, and $A_i = 1$ corresponds to treatment, though the data need not be from a randomized trial. We assume that $\pi \in (0,1)$. For $i$ such that $A_i = 1$, we let $S_i \sim F_S$ be the event time and $U_i \sim F_U$ be the censoring time. For $i$ such that $A_i = 0$, we let $T_i \sim F_T$ be the event time and $V_i \sim F_V$ be the censoring time. We assume that $S_i$ and $U_i$ are independent and $T_i$ and $V_i$ are independent for each $i$ --- that is, the censoring is independent of the event within each treatment arm.  If $A_i = 1$, we  observe the right-censored data $Y_i := \min\{S_i, U_i\}$ and $\Delta_i := I(S_i \leq U_i)$, and if $A_i = 0$, we observe $Y_i := \min\{T_i, V_i\}$ and $\Delta_i := I(T_i \leq V_i)$. The observed data for unit $i$ is then $O_i := (Y_i, \Delta_i, A_i)$, and we assume that $O_1, \dotsc, O_n$ are IID. 
 
When $F_S$ and $F_T$ are discrete, the hazard ratio function can be estimated using the ratio of the empirical hazard functions within each treatment arm. The empirical hazard functions converge at the rate $n^{-1/2}$ to normal limits, so by the delta method, their ratio does as well. Hence, inference for the hazard ratio function in this case can be obtained using standard methods. Furthermore, monotonicity of the hazard ratio function can be enforced by projecting the empirical estimator onto the space of monotone functions \citep{westling2020correcting}. Therefore, here, we focus on the more challenging case where $F_S$ and $F_T$ are absolutely continuous distributions. We make no assumptions about the censoring distributions $F_U$ and $F_V$.

\subsection{Proposed estimator}

Our estimator is based on the representation of $\theta$ presented in Theorem~\ref{thm:mhr_characterization}. We recall from Theorem~\ref{thm:mhr_characterization} that if $F_S \ll F_T$ and $\theta$ is non-decreasing and continuous on the support of $F_T$, then we can represent $\theta$ in terms of the cumulative hazard functions $\Lambda_S$ and $\Lambda_T$ as $\theta = \partial_- \n{GCM}_I(\Lambda_S \circ \Lambda_T^-) \circ \Lambda_T$, where $I$ is the smallest closed interval containing $\n{Im}(\Lambda_T)$. Our estimator is defined by replacing the unknown elements in this representation with nonparametric estimators thereof. We let $\Lambda_{S,n}$ be the stratified Nelson-Aalen estimator \citep{nelson1969hazard, aalen1978nonparametric} of the cumulative hazard function $\Lambda_S$ based on the cohort for which $A = 1$. Similarly, we let $\Lambda_{T,n}$ be the stratified Nelson-Aalen estimator of $\Lambda_T$ based on the control cohort for which $A = 0$. We also define $\eta_n := \Lambda_{T,n}(\gamma_n)$, where $\gamma_n$ is the minimum of the empirical $1-r_n$ quantile of the $Y_i$'s for which $A_i = 0$ and the empirical $1-r_n$ quantile of the $Y_i$'s for which $A_i = 1$, where $r_n > 0$ is a non-increasing sequence converging to $r \geq 0$. It follows that $\gamma_n$ is converging to $\gamma$, the minimum of the $(1-r)$th quantile of $Y$ given $A = 1$ and the $(1-r)$th quantile of $Y$ given $A = 0$. Additional conditions on $r_n$ and practical suggestions for setting $r_n$ will be provided below. We then define our estimator $\theta_n$ of $\theta$ as
\[\theta_n := \partial_-\n{GCM}_{[0,\eta_n]}(\Lambda_{S, n} \circ \Lambda_{T, n}^-) \circ \Lambda_{T, n}\ . \]
It is straightforward to compute $\theta_n$ using standard software packages. Specifically, in the statistical computing software \texttt{R} \citep{r-core}, the Nelson-Aalen estimators $\Lambda_{S, n}, \Lambda_{T, n}$ can be obtained using the package \texttt{survival} \citep{survival-package}, and the slopes of the greatest convex minorant of $\Lambda_{S, n} \circ \Lambda_{T, n}^-$ can be obtained using the package \texttt{fdrtool} \citep{fdrtool-package}. Code for computing $\theta_n$ is provided in Supplementary Material.

\cite{kim2011bayesian} proposed a nonparametric Bayesian approach to estimating a monotone hazard ratio function. Their model permits either monotone non-decreasing or non-increasing hazard ratio functions, whereas the type of monotonicity must be known a priori for our estimator. Their model also allows for the incorporation of covariates, which we have have not explored. However, approximating the posterior distribution in their model is complicated and possibly computationally intensive, in contrast to the simple implementation of our procedure.

\subsection{Convergence in distribution}

We now demonstrate that $n^{1/3} \left[ \theta_n(x) - \theta_0(x)\right]$ converges in distribution for fixed $x$ to a scaled Chernoff distribution. The (standard) Chernoff distribution is defined as the derivative at zero of the GCM of a Brownian motion plus a quadratic; i.e. $W := [\partial_-\n{GCM}_{\d{R}}(Z)](0)$, where $Z(t) := B(t) + t^2$ for $B$ a standard two-sided Brownian motion with $B(0) = 0$. 

\begin{thm}\label{thm: convergence in distribution}
Suppose $x \in (0,\gamma)$ is such that that $F_S$, $F_T$, and $\theta$ are continuously differentiable at $x$ with finite and strictly positive derivatives, and $F_S$, $F_U$, $F_T$, and $F_V$ are $< 1$ in a neighborhood of $x$. Also suppose that there exist $\varepsilon, C > 0$ such that $r_n \geq C (\log n)^{2+\varepsilon} / n$ for all $n$. Then
\[
    n^{1/3} \left[\theta_n(x) - \theta(x)\right] \indist \left\{\frac{4\theta'(x)\kappa(x)}{\lambda_T(x)^2}\right\}^{1/3}W\ ,
\]
where $W$ follows the Chernoff distribution and \[\kappa(x) := \theta(x)\left[\frac{\lambda_T(x)}{\pi\bar{F}_S(x)\bar{F}_U(x)} + \frac{\lambda_S(x)}{(1-\pi)\bar{F}_T(x)\bar{F}_V(x)}\right]\ .\]
\end{thm}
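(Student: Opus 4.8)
The plan is to view $\theta_n$ as a functional of the stratified Nelson--Aalen processes and to apply the argmax/switching machinery for estimators defined through greatest convex minorants, in the spirit of the limit theory for the Grenander estimator and for monotone density/likelihood-ratio estimators. Write $\Psi_0 := \Lambda_S \circ \Lambda_T^-$ and $\Psi_n := \Lambda_{S,n} \circ \Lambda_{T,n}^-$, both regarded as functions on $[0,\eta_n]$ (resp. $I$), and recall from Theorem~\ref{thm:mhr_characterization}(b) that $\theta(x) = \partial_-\n{GCM}_I(\Psi_0) \circ \Lambda_T(x) = \Psi_0'(\Lambda_T(x))$ under the stated smoothness, since $\Psi_0$ is already convex on $\n{Im}(\Lambda_T)$. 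The first step is a \emph{localization/reduction}: by a standard argument the slope of $\n{GCM}(\Psi_n)$ at the point $\Lambda_{T,n}(x)$ equals, with probability tending to one, the slope of the GCM of the \emph{localized, recentered} process $t \mapsto \Psi_n(\Lambda_{T,n}(x) + t) - \Psi_n(\Lambda_{T,n}(x)) - \Psi_0'(\Lambda_T(x))\, t$ on a shrinking neighborhood, because $\Psi_0'$ is continuous and strictly positive at $\Lambda_T(x)$ and (by the differentiability hypotheses) $\Psi_0$ has a strictly positive second derivative there — this is what supplies the $t^2$ term and the $n^{1/3}$ rate.

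The second and central step is to establish the \emph{local process convergence}: on the scale $t \mapsto n^{-1/3}t$ in the $\Lambda_T$-argument,
\[
  n^{2/3}\Bigl\{\Psi_n\bigl(\Lambda_{T,n}(x) + n^{-1/3}t\bigr) - \Psi_n(\Lambda_{T,n}(x)) - \Psi_0'(\Lambda_T(x))\, n^{-1/3}t\Bigr\}
  \;\rightsquigarrow\; \sigma\, B(t) + \tfrac12\Psi_0''(\Lambda_T(x))\, t^2
\]
in the topology of uniform convergence on compacta, where $B$ is a two-sided Brownian motion and $\sigma^2$ is the local variance contributed by the two independent Nelson--Aalen estimators. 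To get this I would (i) expand $\Psi_n - \Psi_0 = (\Lambda_{S,n} - \Lambda_S)\circ\Lambda_{T,n}^- + \Lambda_S\circ\Lambda_{T,n}^- - \Lambda_S\circ\Lambda_T^-$ and linearize the second term using the Hadamard differentiability of the composition/inverse maps (van der Vaart, Ch.~21), so that the drift $\Lambda_T^- - \Lambda_{T,n}^-$ contributes a term of order $O_P(n^{-1/2})$ in the argument which, after the quadratic expansion of $\Lambda_S$, is negligible relative to $n^{-2/3}$ on the local scale; (ii) invoke the martingale/Gaussian-process limit for the Nelson--Aalen estimator on each stratum — jointly, since the two strata are independent given the $A_i$ — together with the oscillation control that gives tightness of the localized, recentered process (a modulus-of-continuity or maximal-inequality bound for the Nelson--Aalen martingale, which is where the condition $r_n \gtrsim (\log n)^{2+\varepsilon}/n$ is used to keep the effective sample size in the relevant stratum from degenerating near $\gamma$). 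A careful bookkeeping of the variance of the increment $\Lambda_{S,n}(b) - \Lambda_{S,n}(a)$ and $\Lambda_{T,n}$ near $x$, pushed through the composition with $\Lambda_T^-$, yields exactly $\sigma^2 = \kappa(x)/\lambda_T(x)$ up to constants; combined with $\Psi_0''(\Lambda_T(x))$, which one computes from $\Psi_0' = \theta\circ\Lambda_T^-$ to be $\theta'(x)/\lambda_T(x)^2$, Brownian scaling $B(ct)\stackrel{d}{=}\sqrt{c}\,B(t)$ collapses the two constants into the single scale factor $\{4\theta'(x)\kappa(x)/\lambda_T(x)^2\}^{1/3}$ multiplying the standard Chernoff variable.

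The third step is the \emph{continuous-mapping / switching argument}: the map sending a function $h$ to the left-derivative at $0$ of $\n{GCM}(h)$ is continuous (in the appropriate sense, a.s.\ with respect to the limit law) at sample paths of $\sigma B(t) + c t^2$, so the localized convergence plus an argmax-continuity lemma (e.g.\ the version in Kim and Pollard, or Lemma~2.1-type statements used for the Grenander estimator) transfers to convergence of $n^{1/3}[\theta_n(x) - \theta(x)]$; a final change-of-variables/scaling identity for GCM slopes of $B(t) + t^2$ identifies the limit as the stated multiple of $W$. One must also check that the endpoint effects are asymptotically irrelevant: since $x \in (0,\gamma)$ and $\gamma_n \to \gamma$ with $\eta_n = \Lambda_{T,n}(\gamma_n) \to \Lambda_T(\gamma)$, the point $\Lambda_{T,n}(x)$ lies in the interior of $[0,\eta_n]$ with probability tending to one, so the GCM over $[0,\eta_n]$ and over a fixed neighborhood agree locally.

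I expect the \textbf{main obstacle} to be the second step — specifically, obtaining the localized process convergence with the \emph{correct} variance constant while simultaneously controlling the error from inverting and composing the Nelson--Aalen estimators. Three things must be handled with care: the Hadamard-differentiability linearization of $\Lambda_{S,n}\circ\Lambda_{T,n}^-$ must be shown to have remainder $o_P(n^{-2/3})$ \emph{uniformly over the $n^{-1/3}$-neighborhood} of $\Lambda_T(x)$, not merely pointwise; the tightness of the recentered local process requires a maximal inequality for the Nelson--Aalen martingale that is uniform over that neighborhood and valid despite the estimator being evaluated near the right tail (this is exactly what the $r_n$ rate assumption buys, and matching the rate to the argument requires attention); and the variance computation must correctly track how the at-risk denominators $\bar F_S\bar F_U$ and $\bar F_T\bar F_V$ and the mixing proportion $\pi$ enter through the two strata and survive the composition with $\Lambda_T^-$, which is the source of the somewhat intricate form of $\kappa(x)$. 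Everything else — the reduction to a local problem, the continuous mapping through the GCM, and the final Brownian rescaling — is by now standard for cube-root-rate shape-constrained estimators.
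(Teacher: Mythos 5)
Your proposal runs the localization/switching/continuous-mapping argument from scratch, which is a valid route but genuinely different from the paper's. The paper instead invokes Theorem~4 of \cite{westling2020unified} as a black box: it writes down the influence-function decomposition $\Lambda_{S,n}-\Lambda_S = \d{P}_n D^*_x + H_{x,n}$ (and similarly for $T$), then verifies that paper's conditions (B1)--(B5) and (A4)--(A5). Your three-step scheme is essentially what the Westling--Carone theorem abstracts, so your argument is morally re-proving a specialization of it; the payoff of the paper's route is that almost all of the real work is isolated to the moment bounds on the remainders $H_{x,n}, R_{x,n}$ (their conditions B4--B5) and the explicit covariance computation that identifies $\kappa(x)$ (condition B3). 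Note also that the paper has to deal with a subtlety your sketch does not mention: the censoring distributions $F_U, F_V$ are \emph{not} assumed continuous at $x$, so the covariance density $v \mapsto A(x,x,v,w)$ is only right-continuous at $x$; the paper explicitly argues that WC's condition (B3c) can be weakened to a one-sided limit there. If you ran your argument from scratch you would encounter exactly the same issue when taking limits of the local variance.

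There is one concrete computational error that makes your stated constant internally inconsistent. You write $\Psi_0' = \theta\circ\Lambda_T^-$, which is right, but then claim $\Psi_0''(\Lambda_T(x)) = \theta'(x)/\lambda_T(x)^2$. Differentiating once more gives $\Psi_0''(u) = \theta'(\Lambda_T^-(u)) \cdot (\Lambda_T^-)'(u) = \theta'(\Lambda_T^-(u))/\lambda_T(\Lambda_T^-(u))$, so in fact $\Psi_0''(\Lambda_T(x)) = \theta'(x)/\lambda_T(x)$ --- one power of $\lambda_T$, not two. The second power of $\lambda_T$ in the theorem's constant does not come from $\Psi_0''$; it comes from the variance of the localized process. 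When you pull the increment of $(\Lambda_{S,n}-\Lambda_S) - \theta(x)(\Lambda_{T,n}-\Lambda_T)$ back through $\Lambda_T^-$, the $u$-interval of length $h$ maps to an $x$-interval of length roughly $h/\lambda_T(x)$, so the local variance per unit $u$ is $\kappa(x)/\lambda_T(x)$ rather than $\kappa(x)$; thus $\sigma^2 = \kappa(x)/\lambda_T(x)$ and $\Psi_0'' = \theta'(x)/\lambda_T(x)$, whose product $\sigma^2\Psi_0'' = \theta'(x)\kappa(x)/\lambda_T(x)^2$ is exactly the quantity appearing inside the cube root. Under your two stated formulas the product would be $\theta'(x)\kappa(x)/\lambda_T(x)^3$, which cannot match the theorem. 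Beyond this slip (and the usual care needed about whether ``Chernoff'' refers to the argmax or the GCM-slope normalization, which affects the numerical factor $4$), the structure of your argument is sound; the hard parts you correctly identify --- uniform-in-$t$ control of the linearization remainder at rate $o_P(n^{-2/3})$ and a maximal inequality for the Nelson--Aalen increments valid up to $\gamma_n$ --- are precisely what the paper's verification of (B4), (B5), (A4), (A5) is doing, with the assumed lower bound on $r_n$ feeding into (A5) via the strong uniform consistency result of \cite{stute1994strong}.
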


Due to its connection with GCMs, the Chernoff distribution appears in the asymptotic distribution of summaries of many monotonicity-constrained estimators (e.g., \citealp{groeneboom1985estimating, huang1995estimation, westling2019nonparametric}, among many others). The properties of the Chernoff distribution were studied extensively by \cite{groeneboom2001computing}. In particular, common quantiles of the distribution are tabulated therein, which facilitates the construction of asymptotic confidence intervals for $\theta$ using Theorem~\ref{thm: convergence in distribution}, as we discuss below.

Theorem~\ref{thm: convergence in distribution} implies that $\theta_n(x)$ converges to $\theta(x)$ at the rate $n^{-1/3}$. This is slower than the rate $n^{-2/5}$ achieved by estimators of the hazard function based on kernel smoothing with optimal bandwidth selection \citep{muller1990locally, Groeneboom2010smoothed}. However, this latter result requires that the hazards possess two continuous derivatives, while Theorem~\ref{thm: convergence in distribution} only requires one continuous derivative of the hazard ratio. In addition, asymptotically valid inference using estimators based on kernel smoothing is challenging due to bias arising in the limit distribution \citep{calonico2018effect}.

Theorem~\ref{thm: convergence in distribution} requires that $r_n$ not converge too quickly to zero, meaning that the upper limit of the region over which the GCM is taken not converge too quickly to the upper limit of support of the observed times. This ensures that $\Lambda_{T,n}$ and $\Lambda_{S,n}$ are uniformly consistent on the increasing interval $[0,\gamma_n]$ \citep{stute1994strong}. The requirement is satisfied if, for instance, $r_n = r >0$ for all $n$, or if $r_n = (\log n)^{2+\varepsilon} / n$ for some $\varepsilon > 0$. In practice, we recommend setting $r_n = 0.05$ for $n < 1000$, and $r_n = (\log n)^{2.1} / n$ for $n \geq 1000$.

\cite{kim2011bayesian} proposed a nonparametric Bayesian approach to estimating a monotone hazard ratio function. Their model permits either monotone non-decreasing or non-increasing hazard ratio functions. The type of monotonicity must be known a priori for our estimator, but we expect that in most cases where monotonicity can be assumed, the direction of monotonicity is also known. Approximating the posterior distribution in their model is complicated and possibly computationally intensive, in contrast to the simple implementation of our procedure. \cite{kim2017analysis} proved that the rate of convergence of the posterior distribution of the nonparametric Bayesian estimator proposed by \cite{kim2011bayesian} is $(n / \log n)^{-1/3}$, which is just a poly-log factor slower than the rate of convergence of our estimator. However, to the best of our knowledge, it is not known whether the posterior distribution of the estimator proposed by \cite{kim2011bayesian} yields asymptotically calibrated confidence intervals for $\theta(x)$. In the next section, we use Theorem~\ref{thm: convergence in distribution} to construct asymptotically valid pointwise intervals using our estimator.

\subsection{Construction of confidence intervals} \label{sec:inference}

We propose two methods of constructing confidence intervals for $\theta$. The first method is based on the asymptotic distribution of $\theta$ provided in Theorem~\ref{thm: convergence in distribution}. By Theorem~\ref{thm: convergence in distribution}, a Wald-type asymptotic $(1-\alpha)$-level confidence interval for $\theta$ is given by $\theta_n \pm \tau_n(x) q_{1-\alpha/2} / n^{1/3}$, where $\tau_n(x)$ is a consistent estimator of $\tau(x) := \{ 4\theta'(x)\kappa(x) / \lambda_T(x)^2\}^{1/3}$, and $q_{p}$ is the $p$th quantile of the standard Chernoff distribution. Quantiles of the Chernoff distribution are tabulated in \cite{groeneboom2001computing}. We note that $\tau(x)$ involves both $\lambda_S(x)$ and $\lambda_T(x)$, so one approach to estimating $\tau(x)$ would be to plug in consistent estimators of $\lambda_S(x)$ and $\lambda_T(x)$. Instead, we rewrite $\tau(x)$ as
\begin{align*}
    \tau(x) &= \left\{4 (\theta \circ \Lambda_T^-)' \circ \Lambda_T(x) \left[\frac{\theta(x)}{\pi\bar{F}_S(x)\bar{F}_U(x-)} + \frac{\theta(x)^2}{(1-\pi)\bar{F}_T(x)\bar{F}_V(x-)}\right]\right\}^{1/3}.
\end{align*}
This form of $\tau(x)$ no longer depends directly on $\lambda_S$ or $\lambda_T$. In this expression, $\theta_n$, $\Lambda_{T,n}$, and the Kaplan-Meier estimators $F_{S,n}$, $F_{U,n}$, $F_{T,n}$, and $F_{V,n}$ can be substituted for their true counterparts in constructing an estimator $\tau_n(x)$ of $\tau(x)$. Hence, the only remaining challenge is to estimate $(\theta \circ \Lambda_T^-)'$. We do this using the derivative estimator obtained by applying a local linear kernel smoother to the set of points $\{( u_k, \theta_n \circ \Lambda_{T, n}^- (u_k)) : k =1, \dotsc, m_n\}$, where $m_n = \lceil n^{2/3} \rceil$, and $\{0 = u_1 < u_2 < \cdots < u_{m_n} = \eta_n\}$ is a uniform grid on $[0, \eta_n]$. We choose the bandwidth for the kernel smoother using cross validation \citep{Guidoum2012kedd}.

Sample splitting has also been shown to yield valid inference and reduced variance for estimators with $n^{-1/3}$-rate asymptotics without the need to estimate additional nuisance parameters in the limit distribution \citep{banerjee2005confidence, banerjee2019divide}. To implement this method, the $n$ observations are first split randomly into $m$ disjoint subsets of approximately equal size. The estimator $\theta_{n,j}$ is then computed for each subset $j \in \{1, \dotsc, m\}$. These estimators are averaged to obtain a pooled estimator $\bar{\theta}_{n,m} = \frac{1}{m}\sum_{j = 1}^{m} \theta_{n,j}$. Finally, an asymptotic $(1-\alpha)$-level confidence interval for $\theta(x)$ is given by $\bar{\theta}_{n,m}(x) \pm t_{1-\alpha/2, m-1} \sigma_{n,m}(x) / \sqrt{m}$, where $\sigma_{n,m}(x)$ is the empirical standard deviation of the $m$ subset estimators $\{\theta_{n,1}(x), \dotsc, \theta_{n,m}(x)\}$ and $t_{p,k}$ is the $p$th quantile of the $t$ distribution with $k$ degrees of freedom.

\section{Numerical studies\label{sec:simulation}}

To assess the finite-sample performance of our proposed estimator and confidence intervals, we performed the following numerical study. We simulated data from three different scenarios corresponding to linear, convex, and concave $\theta$. Defining $\lambda(x) := 0.25+ \sin^2(6\pi x)$, in the linear case, we set  $\lambda_S(x) = x\lambda(x)$ and $\lambda_T(x) = \lambda(x)$, so that $\theta(x) = x$.  In the convex case, we set $\lambda_S(x) = x^2\lambda(x)$ and $\lambda_T(x) = \lambda(x)$, so that $\theta(x) = x^2$. In the concave case, we set $\lambda_S(x) = x\lambda(x)$ and $\lambda_T(x) = \sqrt{x}\lambda(x)$, so that $\theta(x) = \sqrt{x}$. Notably, $\lambda_S(x) > 0$ and $\lambda_T(x) > 0$ for all $x > 0$, and are multiples of a periodic function due to the inclusion of $\sin^2$. This is common in many applications where event rates follow weekly, monthly, or seasonal trends. For the censoring distributions, we set both $F_U(t)$ and $F_V(t)$ as  $1- e^{-0.1t}$ for $0 \leq t < 1$, $1 - e^{-0.15t}$ for $1 \leq t < 2$, and $1$ for $t \geq 2$. Hence, the censoring distributions are mixed discrete-continuous distributions supported on $[0,2]$, and have discrete components at 1 and 2 with probabilities 0.044 and 0.078, respectively. Finally, we set $\pi = 0.5$.


\begin{figure}
    \centering
    \includegraphics[width = .95\linewidth]{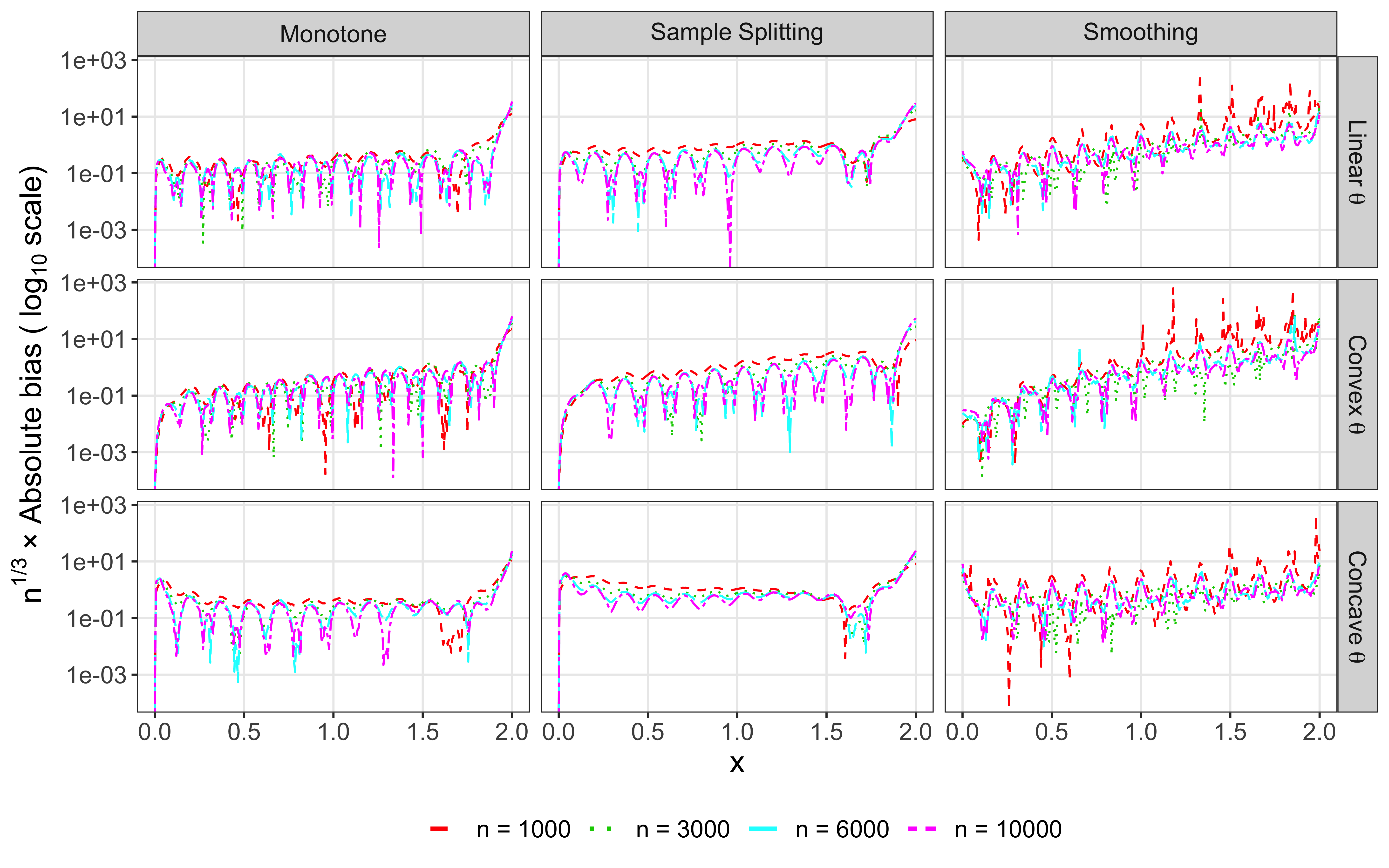}
     \includegraphics[width = .95\linewidth]{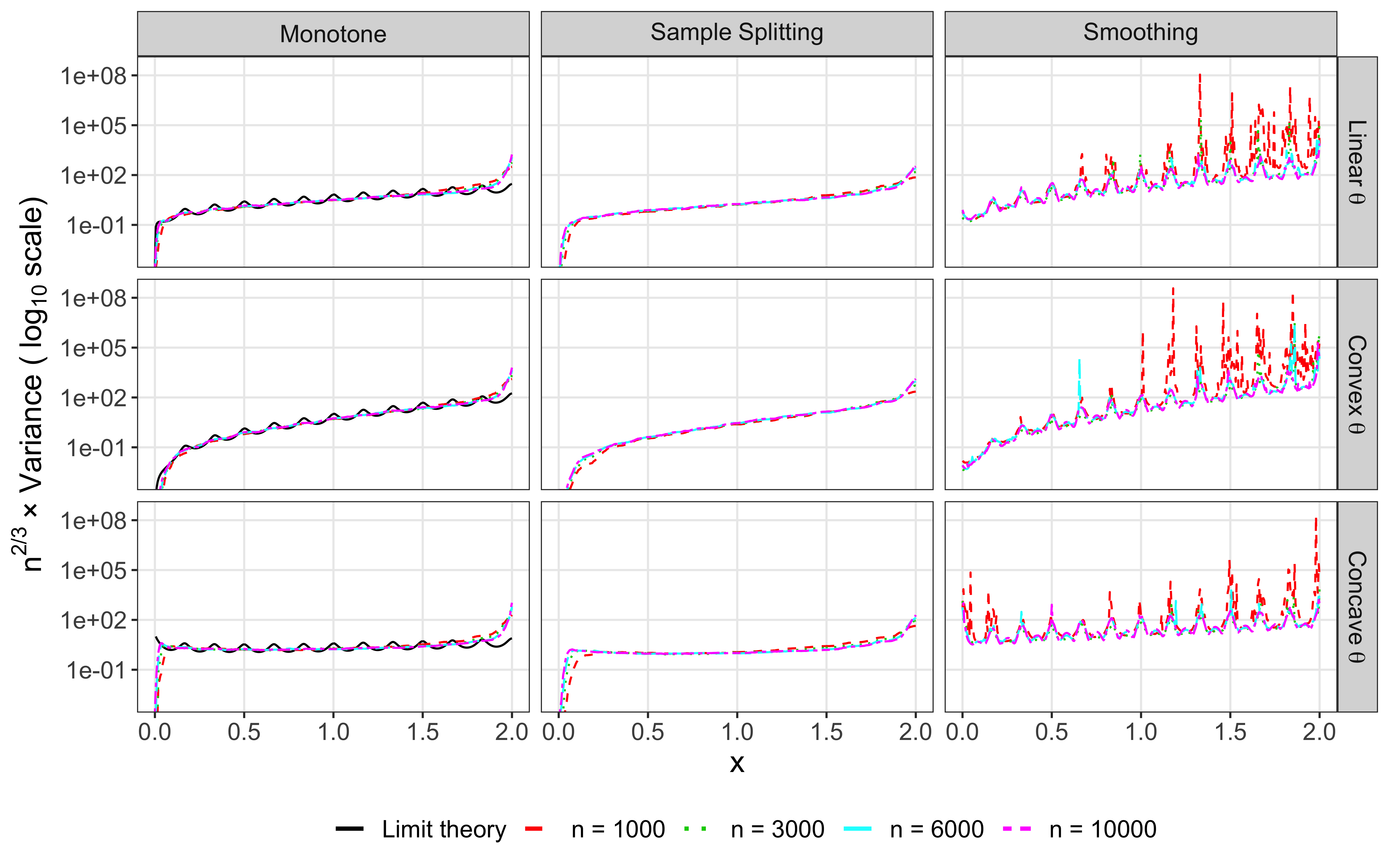}
    \caption{Top: absolute bias of the three estimators scaled by $n^{1/3}$ as a function of $x$. Bottom: variance of the three estimators scaled by $n^{2/3}$ as a function of $x$. The rows correspond to linear, convex and concave $\theta$. The first column is our estimator, the second column is the sample splitting estimator, and the third column is the kernel smoothing estimator.}
    \label{Fig: bias}
\end{figure}

For each sample size $n$ equal to 1000, 3000, 6000, and 10000, we simulated 1000 right-censored datasets for each of the three mechanisms described above. For each dataset and for each $x$ equal to $0.005, 0.01, \dotsc, 2$, we computed our proposed estimator $\theta_n(x)$, the sample splitting estimator $\bar{\theta}_{n,m}(x)$ with $m = 5$ splits, and the corresponding confidence intervals defined in Section~\ref{sec:inference}. For comparison, we also computed an estimator and confidence intervals based on taking the ratio of kernel smoothing estimators of the individual hazard functions, which does not require or enforce monotonicity of $\theta$ \citep{watson1964hazard}. For the kernel smoothing estimators of the hazard functions, we used the Epanechnikov kernel and selected the bandwidths using cross validation. We did not compare our procedure to that of \cite{kim2011bayesian} due to the lack of availability of computer code implementing their procedure.

The top panel of Figure~\ref{Fig: bias} displays $n^{1/3}$ times the absolute bias of the three estimators as a function of $x$. Figure~\ref{Fig: relative bias and variance} in Supplementary Material displays the relative absolute bias of the smoothing and sample splitting estimators to our estimator. The scaled bias of all three estimators generally decreases with sample size, which aligns with the expectation that the biases decrease faster than $n^{-1/3}$ for $x \in (0,2)$. The absolute bias of the three estimators exhibits periodicity inherited from the periodicity of the underlying hazard functions. All three estimators exhibit large bias near $x = 2$, which is expected given the challenges of estimation near the boundary of support. The bias near $x = 0$ is highest for all three estimators in the concave $\theta$ case, and the bias for $x$ between 1 and 2 is largest in the convex case, which makes sense because this is when the derivative of $\theta$ is large.

For most values of $x$, our estimator has slightly smaller absolute bias than the sample splitting estimator, especially for $n = 1000$, which is expected because the sample splitting estimator inherits the bias of our estimator with one-fifth the sample size. The absolute bias of the smoothing-biased estimator relative to that of our estimator is generally proportional to the magnitude of the second derivatives of $\lambda_S$ and $\lambda_T$. The absolute bias of our estimator also generally improves relative to that of the smoothing-based estimator as $x$ increases. We believe this is due to a combination of the monotonicity assumption and censoring. As $x$ increases, the effective sample size decreases as a result of right-censoring, which generally increases bias. However, the monotonicity assumption of our estimator may aid in reducing this bias by using information from earlier time-points, unlike the smoothing-based estimator.

The bottom panel of Figure~\ref{Fig: bias} displays $n^{2/3}$ times the absolute bias of the three estimators as a function of $x$.  Figure~\ref{Fig: relative bias and variance} in Supplementary Material displays the relative variance of the smoothing and sample splitting estimators to our estimator. The variance of our estimator is close to the theoretical limit except for $x$ near 2 for all values of $n$.  The empirical variance does not capture the periodic pattern of the true variance, but we expect it would at larger sample sizes. The variance of the sample splitting estimator is a constant factor smaller than the variance of our estimator, as expected based on the theory of \cite{banerjee2019divide}. The variance increases as a function of $x$ fastest for the convex case, followed by the linear and concave cases. This is due to the appearance of $\theta(x)$ and $\theta'(x)$ in the scale parameter in the limit distribution established in Theorem~\ref{thm: convergence in distribution}. Both of these values are increasing fastest for the convex case. The variance of the smoothing-based estimator is greater than the variance of our estimator in the sample sizes we considered. However, the relative variance of the smoothing-based estimator improves with sample size because the variance of the smoothing-based estimator goes to zero faster than the variance of our estimator. Overall, the mean squared error of our estimator is no worse than that of the smoothing based estimator for all values of $x$ and sample sizes we considered (Figure~\ref{Fig: relative mse} in Supplementary Material).

\begin{figure}[t]
    \centering
    \includegraphics[width = 1\textwidth]{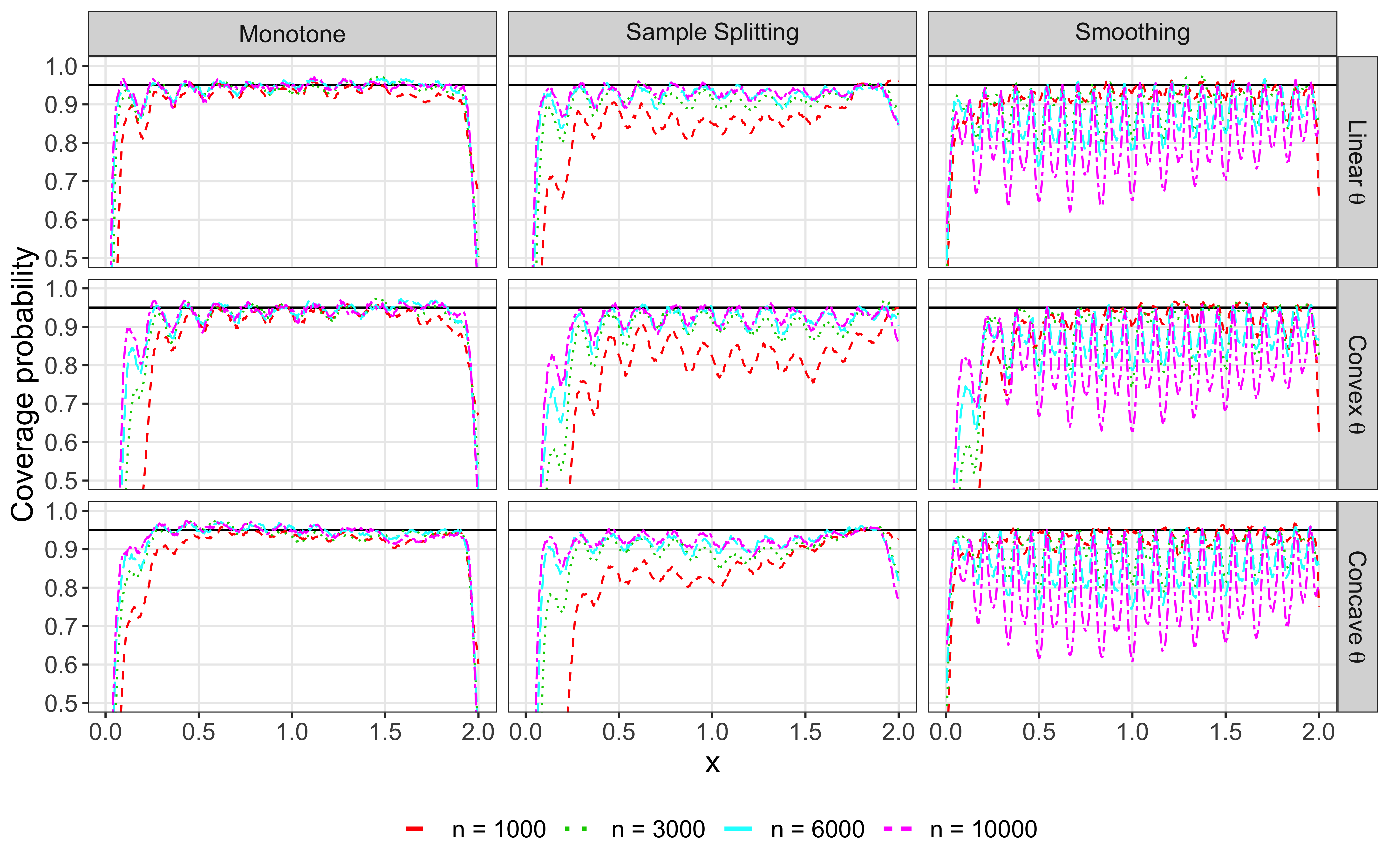}
    \caption{Empirical coverage probabilities of nominal 95\% confidence intervals for the three estimators as a function of $x$. The rows correspond to linear, convex and concave $\theta$. The first column is our estimator, the second column is the sample splitting estimator, and the third column is the kernel smoothing estimator.}
    \label{Fig: coverage}
\end{figure}

Figure~\ref{Fig: coverage} shows the coverage probability of nominal 95\% confidence intervals for the three estimators. 
The coverage of the plug-in intervals centered around our estimator have close to nominal coverage for values of $x$ not too close to 0 or 2. For values of $x$ close to 0, the coverage of the plug-in method is poor due to the difficulty of estimating the derivative $\theta'$ in this region. The sample splitting method has poor coverage for $n=1000$ due to high bias, but the coverage converges to the nominal level as the sample size increases. The smoothing-based estimator has poor coverage for values of $x$ where the second derivatives of the hazard functions are large, which is due to the bias of the smoothing-based estimator.


\section{Analysis of treatment of pulmonary adenocarcinoma\label{sec:application}}

In this section, we use the methods developed in this article to estimate the all-cause mortality hazard ratio of two treatments for pulmonary adenocarcinoma: gefitinib and carboplatin–paclitaxel. Carboplatin-paclitaxel is a type of intravenous chemotherapy, usually taken over a three-hour period once every three weeks for approximately six cycles \citep{herbst2004gefitinib}. Like many chemotherapies, carboplatin-paclitaxel is an invasive treatment that can have severe adverse side effects. Gefitinib is a kinase inhibitor that is taken orally as a tablet once per day. Gefitinib is hence less invasive than carboplatin-paclitaxel, but can also cause adverse side effects. We refer the reader to \cite{mok2009gefitinib} and \cite{inoue2013updated} for additional details about these treatments.

We re-analyzed the results of a clinical trial comparing gefitinib and carboplatin–paclitaxel first reported in \cite{mok2009gefitinib}.  The cohort consisted of $n=1217$ adults with stage IIIB or IV non–small-cell lung cancer with histologic features of adenocarcinoma, and who were nonsmokers or former light smokers and had no previous chemotherapy or biologic or immunologic therapy. These patients  were randomly assigned to gefitinib (609 patients) or carboplatin–paclitaxel (608 patients). 
Treatment for both groups continued until progression of the disease, development of unacceptable toxic effects, a request by the patient or physician to discontinue treatment, serious noncompliance with the protocol, or completion of six chemotherapy cycles. The event time of interest was the time from randomization to the earliest sign of disease progression or death from any cause. Additional details of the trial and cohort design can be found in \cite{mok2009gefitinib}. Since the raw data from this trial are unavailable, we used the event and censoring times reconstructed by \cite{argyropoulos2015analysis} from the published Kaplan-Meier estimates.

Starting from the beginning of treatment, the 12-month estimated survival rates were 24.9\% (95\% CI: 21.4, 29.4) with gefitinib and 6.7\% (95\% CI: 4.3, 8.9) with carboplatin–paclitaxel, suggesting that gefitinib was more effective in preventing the progression of pulmonary adenocarcinoma. \cite{mok2009gefitinib} also estimated a Cox proportional hazard model with treatment by gefitinib, smoking history and gender and obtained a hazard ratio of 0.74 (95\% CI: 0.65, 0.85) corresponding to treatment with gefitinib. They concluded that gefitinib was superior to carboplatin–paclitaxel for treating pulmonary adenocarcinoma.

Although their experimental results confirmed that assignment to gefitinib yielded higher overall 12-month survival probability, the survival curves of the two groups crossed, which suggests that the proportional hazards assumption is violated. Hence, it is of interest to estimate the hazard ratio over time to assess the time-varying effect of gefitinib relative to carboplatin–paclitaxel. The left panel of Figure~\ref{Fig: application} displays the Nelson-Aalen estimators of the cumulative hazard function for the gefitinib cohort versus that of the carboplation-paclitaxel cohort, and its GCM. This plot suggests that it is reasonable to believe that the hazard ratio function is monotone. Furthermore, prior estimates of the hazard ratio function have also suggested that it is monotone \citep{argyropoulos2015analysis}. Here, we estimate the hazard ratio using our monotone estimator, and construct confidence intervals using the plug-in method described in Section~\ref{sec:inference}.

\begin{figure}[ht!]
    \centering
    \includegraphics[width = 1\textwidth]{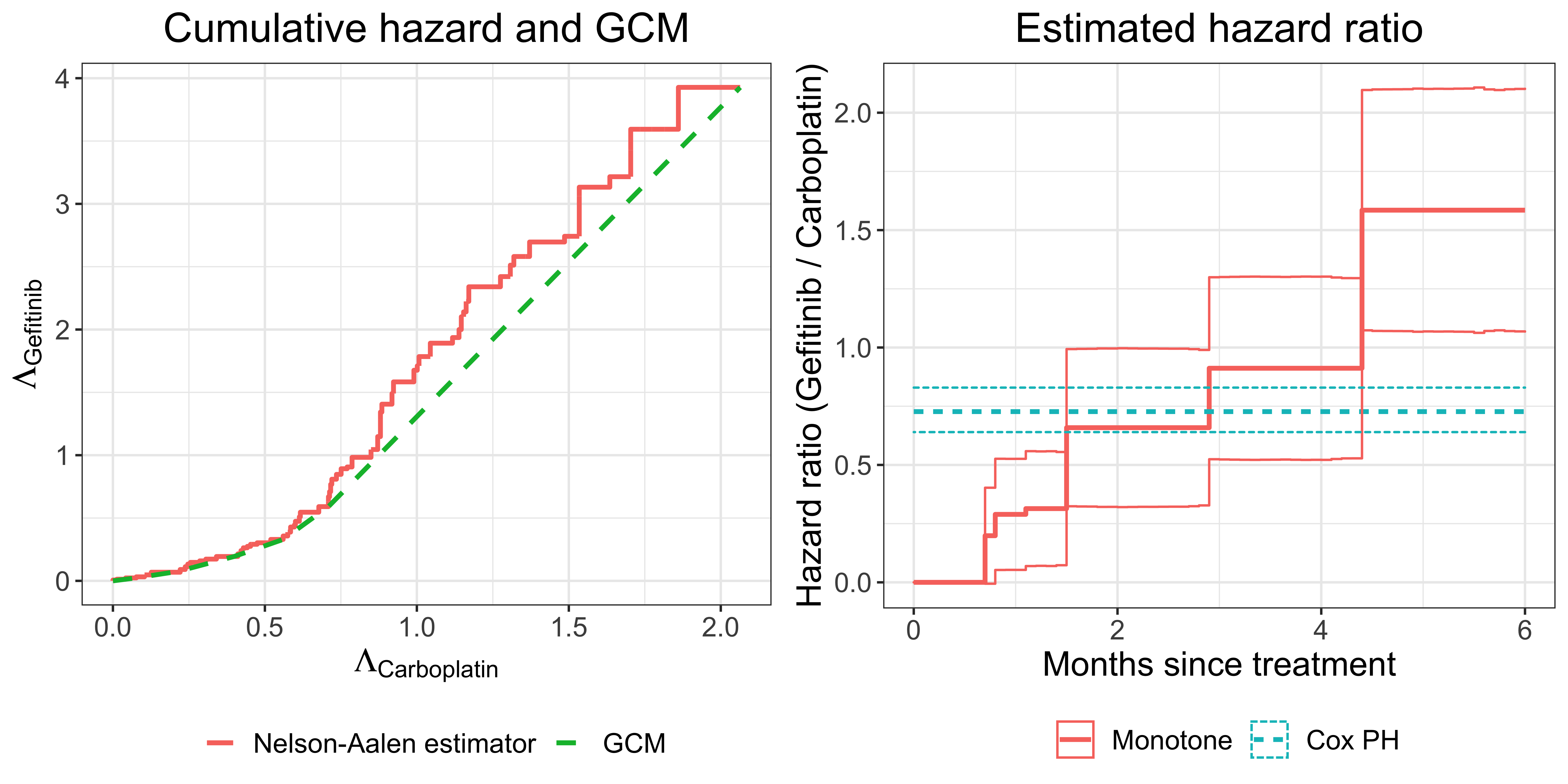}
    \caption{Results of the analysis of the pulmonary adenocarcinoma data. Left panel: the Nelson-Aalen estimator for the gefitinib group plotted against that of the carboplation-paclitaxel group, along with the corresponding greatest convex minorant. Right panel: estimated hazard ratio function and 95\% pointwise confidence intervals using our method and the Cox proportional hazards model.}
    \label{Fig: application}
\end{figure}

The right panel of Figure~\ref{Fig: application} displays the estimated hazard ratio of gefitinib versus carboplation-paclitaxel, as well as the constant hazard ratio estimated by the proportional hazard model. The hazard ratio is only shown through month six, since the estimated curve is flat thereafter. 
We estimate that the hazard ratio increases to one over the span of four months, after which it increases to 1.6 (95\% CI: 1.07, 2.10). Hence, we find evidence that the hazard of disease progression for patients assigned to gefitinib is lower than that of patients assigned to carboplation-paclitaxel through four months post-randomization, but is greater after four months. This could be due to a stronger early benefit of gefitinib and a delayed effect of carboplation-paclitaxel. Alternatively, it could be due to heterogeneous effects of carboplation-paclitaxel relative to gefitinib. For example, frailer patients may have been more likely to progress quickly taking carboplation-paclitaxel than taking gefitinib, leaving a less frail cohort with better survival prospects after four months.

\singlespacing
\section*{Acknowledgements}
The authors  gratefully acknowledge support from the University of Massachusetts Amherst Department of Mathematics and Statistics startup fund (TW) and NSF Award 2113171 (TW). The authors are also grateful for thoughtful feedback from Anna Liu, John Staudenmayer, and Marco Carone.

\singlespacing
\bibliographystyle{apa}
\bibliography{mybib.bib}

\newcommand{\noop}[1]{}
\begin{thebibliography}{}

\bibitem[\protect\astroncite{Aalen}{1978}]{aalen1978nonparametric}
Aalen, O. (1978).
\newblock {Nonparametric inference for a family of counting processes}.
\newblock {\em The Annals of Statistics}, 6(4):701--726.

\bibitem[\protect\astroncite{Anderson and
  Senthilselvan}{1980}]{anderson1980smooth}
Anderson, J. and Senthilselvan, A. (1980).
\newblock {Smooth estimates for the hazard function}.
\newblock {\em Journal of the Royal Statistical Society: Series B
  (Methodological)}, 42(3):322--327.

\bibitem[\protect\astroncite{Argyropoulos and
  Unruh}{2015}]{argyropoulos2015analysis}
Argyropoulos, C. and Unruh, M.~L. (2015).
\newblock {Analysis of time to event outcomes in randomized controlled trials
  by generalized additive models}.
\newblock {\em PloS ONE}, 10(4):e0123784.

\bibitem[\protect\astroncite{Banerjee et~al.}{2019}]{banerjee2019divide}
Banerjee, M., Durot, C., Sen, B., et~al. (2019).
\newblock {Divide and conquer in nonstandard problems and the super-efficiency
  phenomenon}.
\newblock {\em Annals of Statistics}, 47(2):720--757.

\bibitem[\protect\astroncite{Banerjee and
  Wellner}{2005}]{banerjee2005confidence}
Banerjee, M. and Wellner, J.~A. (2005).
\newblock {Confidence intervals for current status data}.
\newblock {\em Scandinavian Journal of Statistics}, 32(3):405--424.

\bibitem[\protect\astroncite{Brunk et~al.}{1966}]{brunk1966maximum}
Brunk, H., Franck, W., Hanson, D., and Hogg, R. (1966).
\newblock {Maximum likelihood estimation of the distributions of two
  stochastically ordered random variables}.
\newblock {\em Journal of the American Statistical Association},
  61(316):1067--1080.

\bibitem[\protect\astroncite{Calonico et~al.}{2018}]{calonico2018effect}
Calonico, S., Cattaneo, M.~D., and Farrell, M.~H. (2018).
\newblock {On the effect of bias estimation on coverage accuracy in
  nonparametric inference}.
\newblock {\em Journal of the American Statistical Association},
  113(522):767--779.

\bibitem[\protect\astroncite{Cox}{1972}]{cox1972regression}
Cox, D.~R. (1972).
\newblock {Regression models and life-tables}.
\newblock {\em Journal of the Royal Statistical Society: Series B
  (Methodological)}, 34(2):187--202.

\bibitem[\protect\astroncite{Deshpande and
  Sengupta}{1995}]{deshpande1995testing}
Deshpande, J.~V. and Sengupta, D. (1995).
\newblock {Testing the hypothesis of proportional hazards in two populations}.
\newblock {\em Biometrika}, 82(2):251--261.

\bibitem[\protect\astroncite{Desquilbet and Meyer}{2005}]{desquilbet2005time}
Desquilbet, L. and Meyer, L. (2005).
\newblock {Time-dependent covariates in the Cox proportional hazards model.
  Theory and practice}.
\newblock {\em Revue d'Épidémiologie et de Santé Publique}, 53(1):51--68.

\bibitem[\protect\astroncite{Durham et~al.}{1998}]{durham1998estimation}
Durham, L.~K., Longini~Jr, I.~M., Halloran, M.~E., Clemens, J.~D., Azhar, N.,
  and Rao, M. (1998).
\newblock {Estimation of vaccine efficacy in the presence of waning:
  application to cholera vaccines}.
\newblock {\em American Journal of Epidemiology}, 147(10):948--959.

\bibitem[\protect\astroncite{Dykstra et~al.}{1991}]{dykstra1991statistical}
Dykstra, R., Kochar, S., and Robertson, T. (1991).
\newblock {Statistical inference for uniform stochastic ordering in several
  populations}.
\newblock {\em The Annals of Statistics}, 19(2):870--888.

\bibitem[\protect\astroncite{Dykstra et~al.}{1995}]{dykstra1995inference}
Dykstra, R., Kochar, S., and Robertson, T. (1995).
\newblock {Inference for likelihood ratio ordering in the two-sample problem}.
\newblock {\em Journal of the American Statistical Association},
  90(431):1034--1040.

\bibitem[\protect\astroncite{Dykstra}{1982}]{dykstra1982maximum}
Dykstra, R.~L. (1982).
\newblock {Maximum likelihood estimation of the survival functions of
  stochastically ordered random variables}.
\newblock {\em Journal of the American Statistical Association},
  77(379):621--628.

\bibitem[\protect\astroncite{Gill and Schumacher}{1987}]{gill1987simple}
Gill, R. and Schumacher, M. (1987).
\newblock {A simple test of the proportional hazards assumption}.
\newblock {\em Biometrika}, 74(2):289--300.

\bibitem[\protect\astroncite{Groeneboom}{1985}]{groeneboom1985estimating}
Groeneboom, P. (1985).
\newblock {Estimating a monotone density}.
\newblock {\em {\normalfont In} Proceedings of the Berkeley Conference in Honor
  of Jerzy Neyman and Jack Kiefer}, 2:539--555.

\bibitem[\protect\astroncite{Groeneboom et~al.}{2010}]{Groeneboom2010smoothed}
Groeneboom, P., Jongbloed, G., and Witte, B.~I. (2010).
\newblock {Maximum smoothed likelihood estimation and smoothed maximum
  likelihood estimation in the current status model}.
\newblock {\em The Annals of Statistics}, 38(1):352 -- 387.

\bibitem[\protect\astroncite{Groeneboom and
  Wellner}{2001}]{groeneboom2001computing}
Groeneboom, P. and Wellner, J.~A. (2001).
\newblock {Computing Chernoff's distribution}.
\newblock {\em Journal of Computational and Graphical Statistics},
  10(2):388--400.

\bibitem[\protect\astroncite{Guidoum}{2020}]{Guidoum2012kedd}
Guidoum, A.~C. (2020).
\newblock Kernel estimator and bandwidth selection for density and its
  derivatives: The kedd package.
\newblock {\it arXiv}: 2012.06102.

\bibitem[\protect\astroncite{Herbst et~al.}{2004}]{herbst2004gefitinib}
Herbst, R.~S., Giaccone, G., Schiller, J.~H., Natale, R.~B., Miller, V.,
  Manegold, C., Scagliotti, G., Rosell, R., Oliff, I., Reeves, J.~A., Wolf,
  M.~K., Krebs, A.~D., Averbuch, S.~D., Ochs, J.~S., Grous, J., Fandi, A., and
  Johnson, D.~H. (2004).
\newblock {Gefitinib in combination with paclitaxel and carboplatin in advanced
  non--small-cell lung cancer: A phase III trial---INTACT 2}.
\newblock {\em Journal of Clinical Oncology}, 22(5):785--794.

\bibitem[\protect\astroncite{Hern{\'a}n}{2010}]{hernan2010hazards}
Hern{\'a}n, M.~A. (2010).
\newblock {The hazards of hazard ratios}.
\newblock {\em Epidemiology}, 21(1):13--15.

\bibitem[\protect\astroncite{Hernandez et~al.}{2018}]{hernandez2018exposure}
Hernandez, I., He, M., Brooks, M.~M., and Zhang, Y. (2018).
\newblock {Exposure-response association between concurrent opioid and
  benzodiazepine use and risk of opioid-related overdose in Medicare Part D
  beneficiaries}.
\newblock {\em JAMA Network Open}, 1(2):e180919.

\bibitem[\protect\astroncite{Holcomb et~al.}{2013}]{holcomb2013prospective}
Holcomb, J.~B., Del~Junco, D.~J., Fox, E.~E., Wade, C.~E., Cohen, M.~J.,
  Schreiber, M.~A., Alarcon, L.~H., Bai, Y., Brasel, K.~J., Bulger, E.~M.,
  et~al. (2013).
\newblock {The prospective, observational, multicenter, major trauma
  transfusion (PROMMTT) study: comparative effectiveness of a time-varying
  treatment with competing risks}.
\newblock {\em JAMA Surgery}, 148(2):127--136.

\bibitem[\protect\astroncite{Huang and Wellner}{1995}]{huang1995estimation}
Huang, J. and Wellner, J.~A. (1995).
\newblock {Estimation of a monotone density or monotone hazard under random
  censoring}.
\newblock {\em Scandinavian Journal of Statistics}, 22:3--33.

\bibitem[\protect\astroncite{Inoue et~al.}{2013}]{inoue2013updated}
Inoue, A., Kobayashi, K., Maemondo, M., Sugawara, S., Oizumi, S., Isobe, H.,
  Gemma, A., Harada, M., Yoshizawa, H., Kinoshita, I., et~al. (2013).
\newblock {Updated overall survival results from a randomized phase III trial
  comparing gefitinib with carboplatin--paclitaxel for chemo-na{\"\i}ve
  non-small cell lung cancer with sensitive EGFR gene mutations (NEJ002)}.
\newblock {\em Annals of Oncology}, 24(1):54--59.

\bibitem[\protect\astroncite{Kalbfleisch and
  Prentice}{2011}]{kalbfleisch2011statistical}
Kalbfleisch, J.~D. and Prentice, R.~L. (2011).
\newblock {\em {The Statistical Analysis of Failure Time Data}}.
\newblock John Wiley \& Sons.

\bibitem[\protect\astroncite{Kaplan and Meier}{1958}]{kaplan1958nonparametric}
Kaplan, E.~L. and Meier, P. (1958).
\newblock {Nonparametric estimation from incomplete observations}.
\newblock {\em Journal of the American Statistical Association},
  53(282):457--481.

\bibitem[\protect\astroncite{Kim et~al.}{2017}]{kim2017analysis}
Kim, G., Kim, Y., and Choi, T. (2017).
\newblock {Bayesian analysis of the proportional hazards model with
  time-varying coefficients}.
\newblock {\em Scandinavian Journal of Statistics}, 44(2):524--544.

\bibitem[\protect\astroncite{Kim et~al.}{2011}]{kim2011bayesian}
Kim, Y., Park, J.~K., and Kim, G. (2011).
\newblock {Bayesian analysis for monotone hazard ratio}.
\newblock {\em Lifetime Data Analysis}, 17(2):302--320.

\bibitem[\protect\astroncite{Klein and Moeschberger}{2003}]{klein2003survival}
Klein, J.~P. and Moeschberger, M.~L. (2003).
\newblock {\em {Survival Analysis: Techniques for Censored and Truncated
  Data}}.
\newblock Springer.

\bibitem[\protect\astroncite{Lehmann and Rojo}{1992}]{lehmann1992invariant}
Lehmann, E.~L. and Rojo, J. (1992).
\newblock {Invariant directional orderings}.
\newblock {\em The Annals of Statistics}, 20(4):2100--2110.

\bibitem[\protect\astroncite{Li et~al.}{2015}]{li2015statistical}
Li, H., Han, D., Hou, Y., Chen, H., and Chen, Z. (2015).
\newblock {Statistical inference methods for two crossing survival curves: a
  comparison of methods}.
\newblock {\em PLoS ONE}, 10(1):e0116774.

\bibitem[\protect\astroncite{Mok et~al.}{2009}]{mok2009gefitinib}
Mok, T.~S., Wu, Y.-L., Thongprasert, S., Yang, C.-H., Chu, D.-T., Saijo, N.,
  Sunpaweravong, P., Han, B., Margono, B., Ichinose, Y., et~al. (2009).
\newblock {Gefitinib or carboplatin--paclitaxel in pulmonary adenocarcinoma}.
\newblock {\em New England Journal of Medicine}, 361(10):947--957.

\bibitem[\protect\astroncite{M{\"u}ller and Wang}{1990}]{muller1990locally}
M{\"u}ller, H.-G. and Wang, J.-L. (1990).
\newblock {Locally adaptive hazard smoothing}.
\newblock {\em Probability Theory and Related Fields}, 85(4):523--538.

\bibitem[\protect\astroncite{M{\"u}ller and Wang}{1994}]{Mller1994HazardRE}
M{\"u}ller, H.-G. and Wang, J.~L. (1994).
\newblock {Hazard rate estimation under random censoring with varying kernels
  and bandwidths}.
\newblock {\em Biometrics}, 50 1:61--76.

\bibitem[\protect\astroncite{Mösching and
  Dümbgen}{2020}]{mosching2020estimation}
Mösching, A. and Dümbgen, L. (2020).
\newblock Estimation of a likelihood ratio ordered family of distributions --
  with a connection to total positivity.
\newblock {\it arXiv}:2007.11521.

\bibitem[\protect\astroncite{Nelson}{1969}]{nelson1969hazard}
Nelson, W. (1969).
\newblock {Hazard plotting for incomplete failure data}.
\newblock {\em Journal of Quality Technology}, 1(1):27--52.

\bibitem[\protect\astroncite{O'Quigley}{2008}]{o2008proportional}
O'Quigley, J. (2008).
\newblock {\em {Proportional Hazards Regression}}.
\newblock Springer.

\bibitem[\protect\astroncite{Pr{\ae}stgaard and
  Huang}{1996}]{praestgaard1996asymptotic}
Pr{\ae}stgaard, J.~T. and Huang, J. (1996).
\newblock {Asymptotic theory for nonparametric estimation of survival curves
  under order restrictions}.
\newblock {\em The Annals of Statistics}, 24(4):1679--1716.

\bibitem[\protect\astroncite{{R Core Team}}{2021}]{r-core}
{R Core Team} (2021).
\newblock {\em {R: A Language and Environment for Statistical Computing}}.
\newblock R Foundation for Statistical Computing, Vienna, Austria.

\bibitem[\protect\astroncite{Rebora et~al.}{2014}]{rebora2014bshazard}
Rebora, P., Salim, A., and Reilly, M. (2014).
\newblock {bshazard: A flexible tool for nonparametric smoothing of the hazard
  function}.
\newblock {\em {The R Journal}}, 6(2):114--122.

\bibitem[\protect\astroncite{Sekula et~al.}{2013}]{sekula2013comprehensive}
Sekula, P., Dunant, A., Mockenhaupt, M., Naldi, L., Bavinck, J. N.~B., Halevy,
  S., Kardaun, S., Sidoroff, A., Liss, Y., Schumacher, M., et~al. (2013).
\newblock {Comprehensive survival analysis of a cohort of patients with
  Stevens--Johnson syndrome and toxic epidermal necrolysis}.
\newblock {\em Journal of Investigative Dermatology}, 133(5):1197--1204.

\bibitem[\protect\astroncite{Shaked and
  Shanthikumar}{2007}]{shaked2007stochastic}
Shaked, M. and Shanthikumar, J.~G. (2007).
\newblock {\em {Stochastic Orders}}.
\newblock Springer Science \& Business Media.

\bibitem[\protect\astroncite{Strimmer}{2008}]{fdrtool-package}
Strimmer, K. (2008).
\newblock {fdrtool: a versatile R package for estimating local and tail
  area-based false discovery rates}.
\newblock {\em Bioinformatics (Oxford, England)}, 24(12):1461--1462.

\bibitem[\protect\astroncite{Struthers and
  Kalbfleisch}{1986}]{struthers1986misspecified}
Struthers, C.~A. and Kalbfleisch, J.~D. (1986).
\newblock {Misspecified proportional hazard models}.
\newblock {\em Biometrika}, 73(2):363--369.

\bibitem[\protect\astroncite{Stute}{1994}]{stute1994strong}
Stute, W. (1994).
\newblock {Strong and weak representations of cumulative hazard function and
  Kaplan-Meier estimators on increasing sets}.
\newblock {\em Journal of Statistical Planning and Inference}, 42(3):315--329.

\bibitem[\protect\astroncite{Therneau}{2022}]{survival-package}
Therneau, T.~M. (2022).
\newblock {\em {A Package for Survival Analysis in R}}.
\newblock R package version 3.3-1.

\bibitem[\protect\astroncite{van~der Vaart and van~der
  Laan}{2006}]{van2006estimating}
van~der Vaart, A. and van~der Laan, M.~J. (2006).
\newblock {Estimating a survival distribution with current status data and
  high-dimensional covariates}.
\newblock {\em The International Journal of Biostatistics}, 2(1):\noop.

\bibitem[\protect\astroncite{van~der Vaart}{2000}]{van2000asymptotic}
van~der Vaart, A.~W. (2000).
\newblock {\em {Asymptotic Statistics}}.
\newblock Cambridge University Press.

\bibitem[\protect\astroncite{van~der Vaart and Wellner}{1996}]{van1996weak}
van~der Vaart, A.~W. and Wellner, J.~A. (1996).
\newblock {\em {Weak Convergence and Empirical Processes}}.
\newblock Springer.

\bibitem[\protect\astroncite{Wasserman}{2013}]{wasserman2013all}
Wasserman, L. (2013).
\newblock {\em {All of Statistics: A Concise Course in Statistical Inference}}.
\newblock Springer Science \& Business Media.

\bibitem[\protect\astroncite{Watson and Leadbetter}{1964}]{watson1964hazard}
Watson, G. and Leadbetter, M. (1964).
\newblock {Hazard analysis. I}.
\newblock {\em Biometrika}, 51(1/2):175--184.

\bibitem[\protect\astroncite{Westling et~al.}{2020a}]{westling2020unified}
Westling, T., Carone, M., et~al. (2020a).
\newblock {A unified study of nonparametric inference for monotone functions}.
\newblock {\em Annals of Statistics}, 48(2):1001--1024.

\bibitem[\protect\astroncite{Westling et~al.}{2021}]{westling2019nonparametric}
Westling, T., Downes, K.~J., and Small, D.~S. (2021).
\newblock {Nonparametric maximum likelihood estimation under a likelihood ratio
  order}.
\newblock {\em {\it Statistica Sinica}}, page Advance online publication.

\bibitem[\protect\astroncite{Westling et~al.}{2020b}]{westling2020correcting}
Westling, T., van~der Laan, M.~J., and Carone, M. (2020b).
\newblock {Correcting an estimator of a multivariate monotone function with
  isotonic regression}.
\newblock {\em Electronic Journal of Statistics}, 14(2):3032--3069.

\bibitem[\protect\astroncite{Whitney et~al.}{2019}]{whitney2019comment}
Whitney, D., Shojaie, A., and Carone, M. (2019).
\newblock {Comment: Models as (deliberate) approximations}.
\newblock {\em Statistical Science: A Review Journal of the Institute of
  Mathematical Statistics}, 34(4):591--598.

\bibitem[\protect\astroncite{Yu et~al.}{2017}]{yu2017density}
Yu, T., Li, P., and Qin, J. (2017).
\newblock {Density estimation in the two-sample problem with likelihood ratio
  ordering}.
\newblock {\em Biometrika}, 104(1):141--152.

\end{thebibliography}

\clearpage

\begin{adjustwidth}{-.25in}{-.25in}

\section*{Supplementary Material}\vspace{.1in}

\subsection*{Supplementary figures}

\begin{figure}[htbp!]
    \centering
    \includegraphics[width = 1\textwidth]{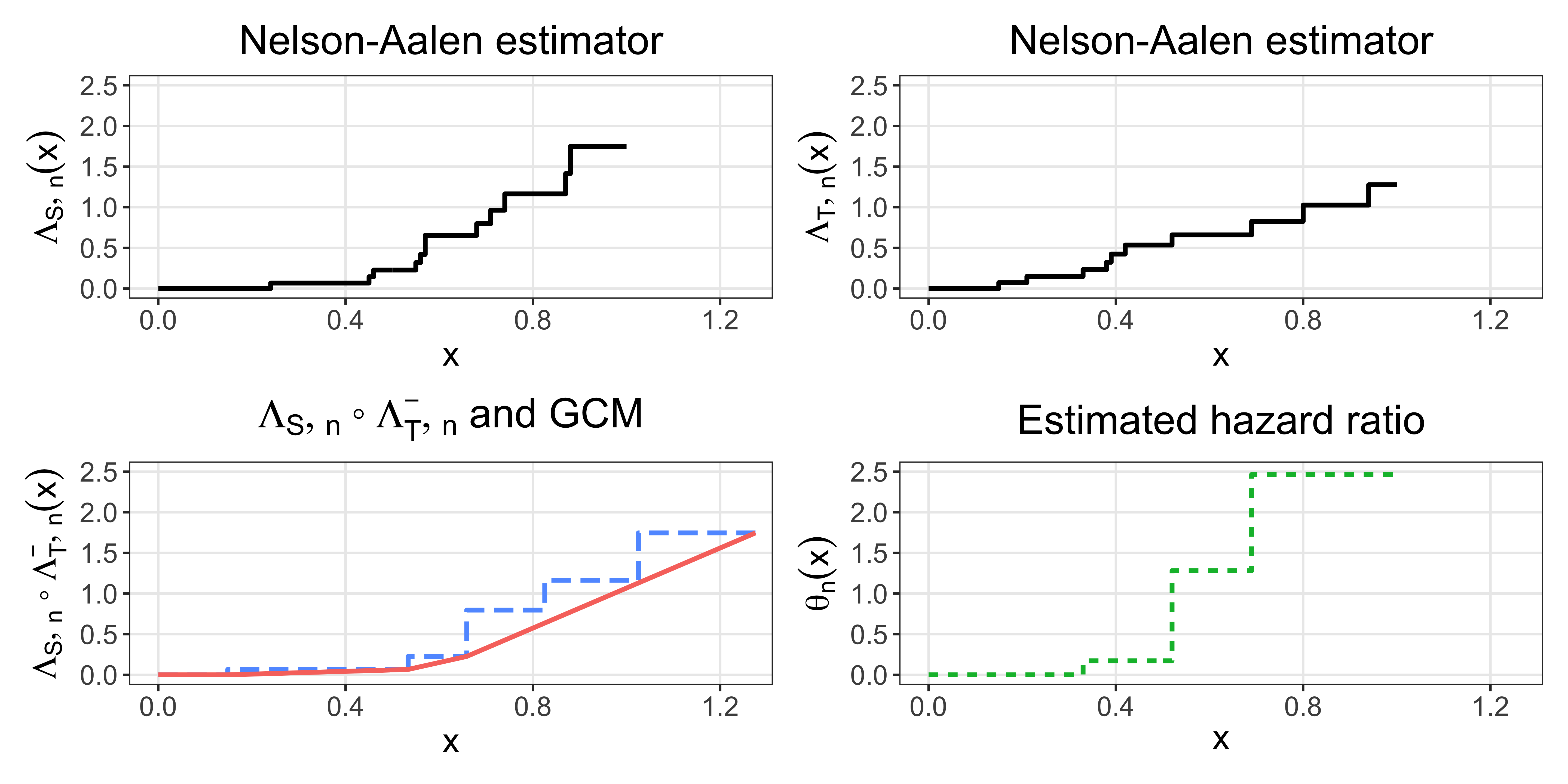}
    \caption{A toy example showing the relationship between Nelson-Aalen estimators, GCM and our hazard ratio estimator $\theta_n$. Left upper panel: the Nelson-Aalen estimor $\Lambda_{S,n}$. Right upper panel: the Nelson-Aalen estimor $\Lambda_{T,n}$. Left lower panel: $\Lambda_{S,n} \circ \Lambda_{T,n}^-$ (solid line) along with its GCM (dashed line). Right lower panel: estimated hazard ratio function $\theta_n$.}
    \label{Fig: toy example}
\end{figure}

\begin{figure}[htbp!]
    \centering
    \includegraphics[width = 0.95\textwidth]{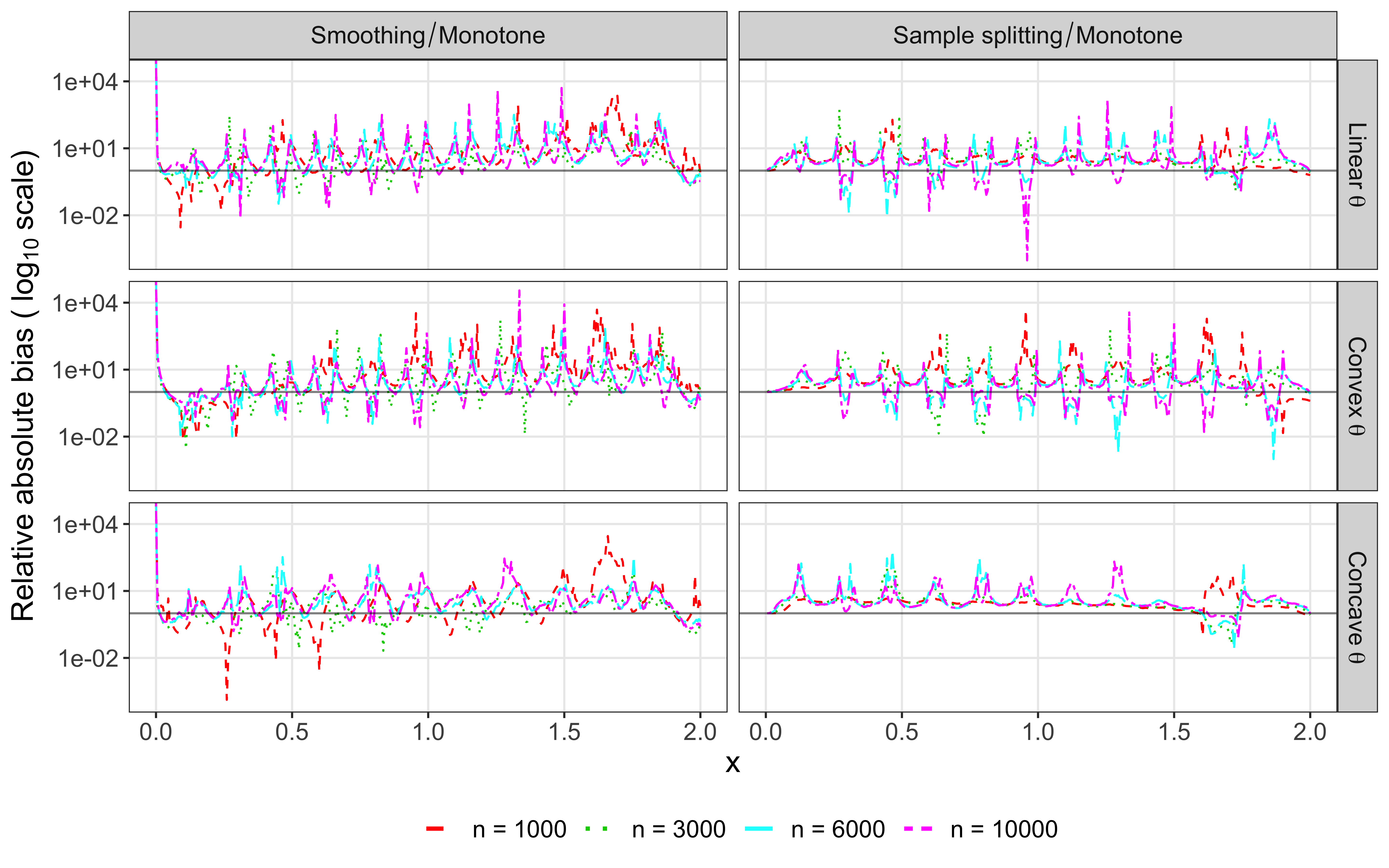}
    \includegraphics[width = 0.95\textwidth]{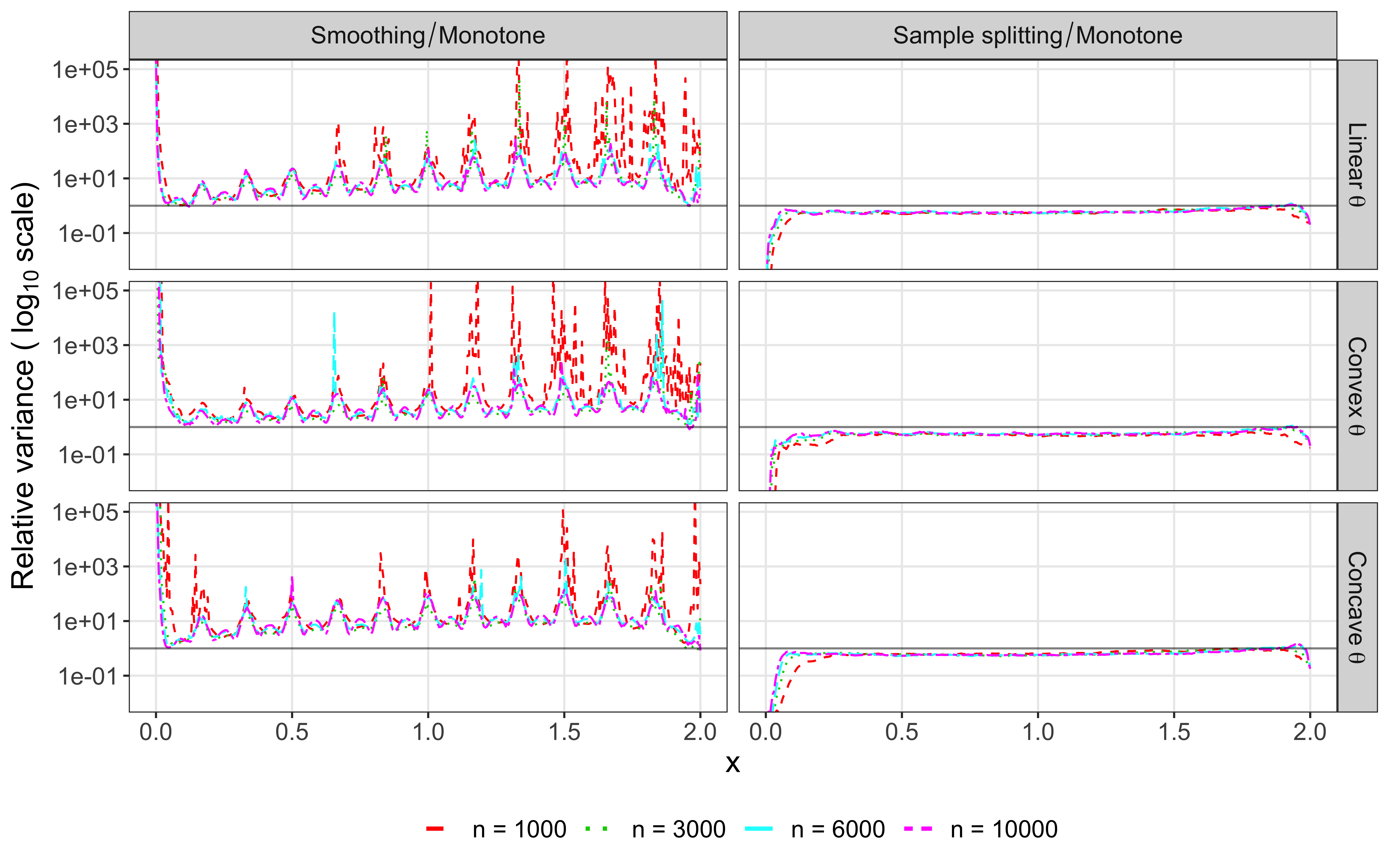}
    \caption{Relative bias and variance of the estimators. Upper panel: The relative absolute bias of our monotone estimator over that of the smoothing estimator (left column) and sample splitting estimator (right column). Lower panel: The relative variance of our monotone estimator over that of the smoothing estimator and sample splitting estimator. The solid gray horizontal line represents a ratio of 1.}
    \label{Fig: relative bias and variance}
\end{figure}

\begin{figure}[htbp!]
    \centering
    \includegraphics[width = 0.95\textwidth]{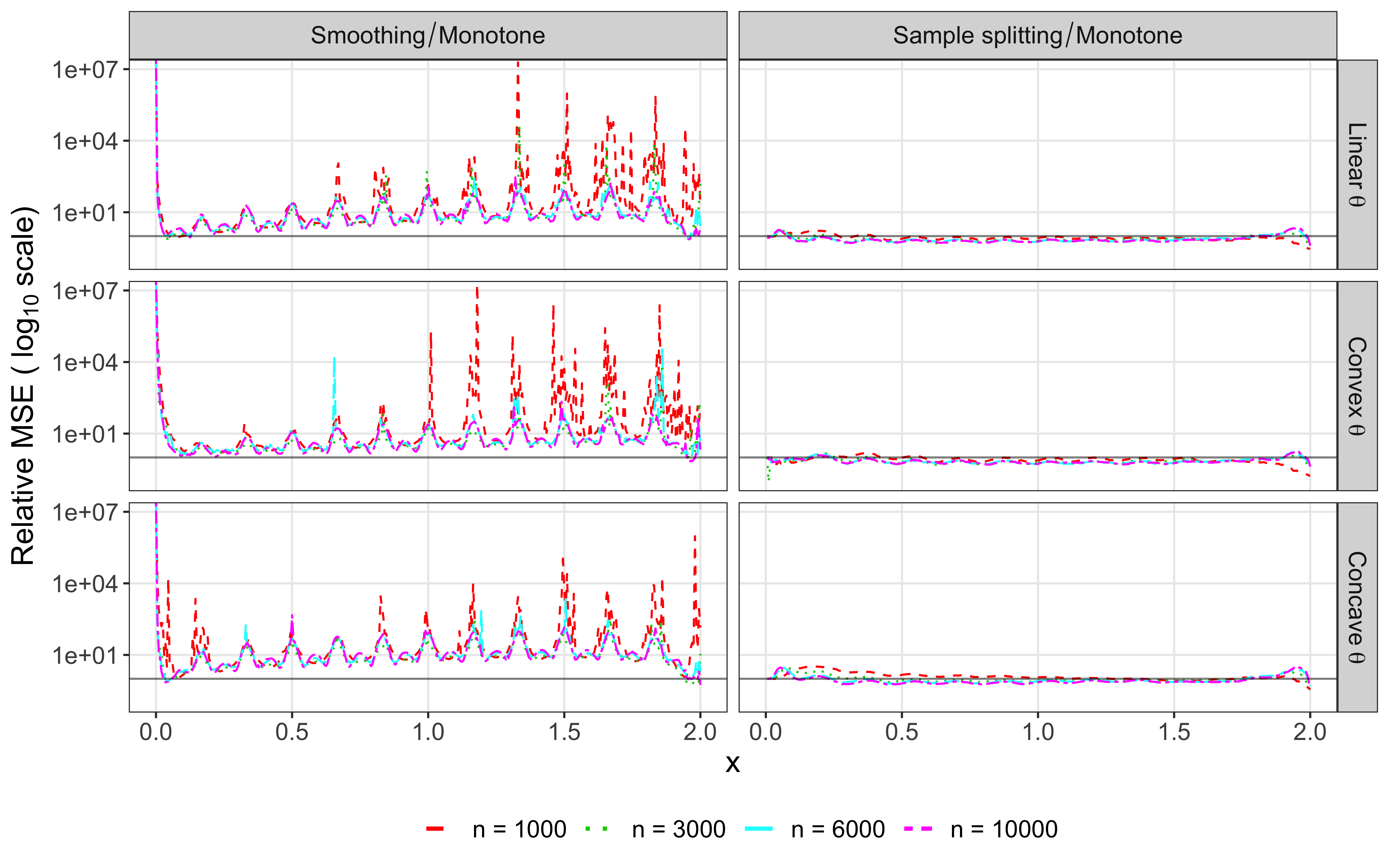}
    \caption{Relative mean squared error (MSE) of the estimators. Left column: The relative MSE of our monotone estimator over that of the smoothing estimator. Right panel: The relative MSE of our monotone estimator over that of the sample splitting estimator. The solid gray horizontal line represents a ratio of 1.}
    \label{Fig: relative mse}
\end{figure}

\clearpage

\subsection*{Proof of Theorems}

\begin{proof}[\bfseries{Proof of Theorem~\ref{thm:invariant}}]
(1) For any $S$, we can take $\mu = F_S$, so that $f_S=1$ on the support of $S$, and  $\lambda_S = 1/ \bar{F}_{S-}$. We then have $\theta = 1$ on the support of $S$, which is monotone.

(2) Suppose $F_S \geq_{MHR} F_T$ and $F_T \geq_{MHR} F_U$. Let $\s{S} := \n{Supp}(F_S)$, $\s{T} := \n{Supp}(F_T)$, and $\s{U} := \n{Supp}(U)$. We can take $\mu$ to be a measure dominating $F_S$, $F_T$, and $F_U$. We want to show that $\lambda_S(x) /\lambda_U(x) \leq \lambda_S(y) / \lambda_U(y)$ for all $x, y \in \s{S} \cup \s{U}$ such that $x < y$. Let $x,y \in (\s{S} \cup \s{T}) \cap (\s{S} \cup \s{U}) \cap (\s{T} \cup \s{U})$ be such that $x \leq y$. Then since $F_S \geq_{MHR} F_T$ and $x,y \in \s{S} \cup \s{T}$, $\lambda_S(x)/\lambda_T(x) \leq \lambda_S(y)/\lambda_T(y)$. Similarly, since $F_T \geq_{MHR} F_U$ and $x,y \in \s{T} \cup \s{U}$, $\lambda_T(x)/\lambda_U(x) \leq \lambda_T(y)/\lambda_U(y)$. Hence,
\[ \frac{\lambda_S(x)}{\lambda_U(x)} = \frac{\lambda_S(x)}{\lambda_T(x)} \frac{\lambda_T(x)}{\lambda_U(x)} \leq \frac{\lambda_S(y)}{\lambda_T(y)} \frac{\lambda_T(y)}{\lambda_U(y)} =\frac{\lambda_S(y)}{\lambda_U(y)} .\]
We now need to address 
\[x,y \in (\s{S} \cup \s{U}) \backslash [(\s{S} \cup \s{T}) \cap (\s{S} \cup \s{U}) \cap (\s{T} \cap \s{U})] = [\s{S} \backslash (\s{T} \cup \s{U})] \cup  [\s{U} \backslash (\s{S} \cup \s{T})].\]
If $x  \in  \s{U} \backslash (\s{S} \cup \s{T})$, then $\lambda_S(x) / \lambda_U(x) = 0$, which is guaranteed to be no larger than $\lambda_S(y) / \lambda_U(y)$. Similarly, if $y \in   \s{S} \backslash (\s{T} \cup \s{U})$, then $\lambda_S(y) / \lambda_U(y) = +\infty$, which is guaranteed to be no smaller than $\lambda_S(x) / \lambda_U(x)$.

The only remaining case is $x \in  \s{S} \backslash (\s{T} \cup \s{U})$ and $y \in  \s{U} \backslash (\s{S} \cup \s{T})$. We show there cannot simultaneously be such $x,y$ with $x < y$. First, if $x \in \s{S} \backslash (\s{T} \cup \s{U})$, then $\lambda_S(x) / \lambda_T(x) = +\infty$, which implies since  $S \geq_{MHR} T$ that $\lambda_S(z) / \lambda_T(z) = +\infty$ for all $z > x$ with $z \in \s{S} \cup \s{T}$. This implies that $\lambda_T(z)  = 0$ for all such $z$, so that $z$ is not in $\s{T}$. Therefore, $t < x$ for all $t \in \s{T}$. Similarly, if $y \in  \s{U} \backslash (\s{S} \cup \s{T})$, then $\lambda_T(y) / \lambda_U(y)= 0$, which implies since $T \geq_{MHR} U$ that $\lambda_T(z) / \lambda_U(z) = 0$ for all $z < y$ with $z \in \s{T} \cup \s{U}$. This implies that $\lambda_T(z) = 0$ for all such $z$, so that $z$ is not in $\s{T}$. Therefore, $t > y$ for all $t \in \s{T}$. Since $\s{T}$ cannot be empty, this completes the proof of (2).

(3) If $F_S$ and $F_T$ are both dominated by $\mu$, then $F_{\psi(S)} = F_S \circ \psi^{-1}$ and $F_{\psi(T)} = F_T \circ \psi^{-1}$ are both dominated by $\mu \circ \psi^{-1}$. Let $f_{\psi(S)}$ be the density of $F_{\psi(S)}$ with respect to  $\mu \circ \psi^{-1}$, and $F_{\psi(T)}$ be the density of $F_{\psi(T)}$ with respect to  $\mu \circ \psi^{-1}$. Then $f_{\psi(S)} / f_{\psi(T)} = (f_S \circ \psi^{-1}) / (f_T \circ \psi^{-1})$, so 
\[ \frac{\lambda_{\psi(S)}}{\lambda_{\psi(T)}} = \frac{f_{\psi(S)} /\bar{F}_{\psi(S),-}}{f_{\psi(T)} / \bar{F}_{\psi(T),-}} = \frac{f_{\psi(S)}}{f_{\psi(T)}} \frac{\bar{F}_{\psi(T),-}}{\bar{F}_{\psi(S),-}} =\frac{f_{S} \circ \psi^{-1}}{f_{T} \circ \psi^{-1}} \frac{\bar{F}_{T,-} \circ \psi^{-1}}{\bar{F}_{S,-} \circ \psi^{-1}} = \theta \circ \psi^{-1}. \]
Since $\theta$ and $\psi$ are monotone, so is $\theta \circ \psi^{-1}$.
\end{proof}

\begin{lemma}\label{lemma:monotone_comp}
Suppose that $J \subseteq \d{R}$ is an interval and that $\Phi : J \to \d{R}$ is a non-decreasing and c\`{a}dl\`{a}g function, and $r: J \to \d{R}$ is continuous on $\n{Supp}(\Phi)$. Let $\Gamma(x) := \int_{J \cap (-\infty, x]} r(u) \, d\Phi(u)$. Then $r$ is non-decreasing on $\n{Supp}(\Phi)$ if and only if $\Gamma \circ \Phi^-$ is convex on $\n{Im}(\Phi)$, and if $r$ is non-decreasing on $\n{Supp}(\Phi)$ then $r(x) = \partial_- \n{GCM}_{I}(\Gamma \circ \Phi^-) \circ \Phi(x)$ for all $x \in \n{Supp}(\Phi)$, where $I$ is the smallest interval in $\d{R}$ containing $\n{Im}(\Phi)$.
\end{lemma}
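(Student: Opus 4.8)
The plan is to establish the two directions of the equivalence and then the identification of $r$ with the left derivative of the GCM, working throughout on $\n{Im}(\Phi)$ via the change of variables $u = \Phi^-(v)$. The key structural fact I would exploit is that $\Gamma \circ \Phi^-$ is itself an indefinite integral: for $v \in \n{Im}(\Phi)$, one has $\Gamma \circ \Phi^-(v) = \int_{[\Phi(\inf J), v]} (r \circ \Phi^-)(w)\, dw$ plus a boundary term, because $\Phi$ pushes $d\Phi$ forward to (a piece of) Lebesgue measure on $\n{Im}(\Phi)$, and on the complement of $\n{Im}(\Phi)$ the function $\Gamma \circ \Phi^-$ interpolates linearly across the jump gaps of $\Phi$ (this is exactly the behavior of a generalized inverse composed with an increasing integral). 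So $\Gamma \circ \Phi^-$ is absolutely continuous on $I$ with a nondecreasing-candidate derivative $r \circ \Phi^-$ on $\n{Im}(\Phi)$ and a constant (namely the value of $r$ at the left endpoint of the gap) on each interval removed by a jump.

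First I would prove the ``only if'' direction: if $r$ is non-decreasing on $\n{Supp}(\Phi)$, then the a.e.\ derivative of $\Gamma \circ \Phi^-$ on $I$ is non-decreasing. On $\n{Im}(\Phi)$ this is immediate since $\Phi^-$ is non-decreasing and $r$ is non-decreasing on $\n{Supp}(\Phi)$ (and one must check that $\Phi^-$ lands in $\n{Supp}(\Phi)$, which holds by definition of the support of a non-decreasing function). Across a jump gap $(\Phi(x-), \Phi(x))$ the derivative equals the common value $r(x)$, which sits between the neighboring slopes by monotonicity of $r$. An absolutely continuous function with non-decreasing derivative is convex, giving convexity of $\Gamma \circ \Phi^-$ on $\n{Im}(\Phi)$, hence on $I$ by continuity.

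For the ``if'' direction I would argue by contraposition: suppose $r$ is not non-decreasing on $\n{Supp}(\Phi)$, so there exist $x_1 < x_2$ in $\n{Supp}(\Phi)$ with $r(x_1) > r(x_2)$. By continuity of $r$ on $\n{Supp}(\Phi)$ and the definition of the support (points $x$ with $\Phi(x-) < \Phi(x+)$ for all neighborhoods), I can find $v_1 < v_2$ in $\n{Im}(\Phi)$ at which the derivative of $\Gamma \circ \Phi^-$ is strictly decreasing, contradicting convexity; concretely, take $v_i$ near $\Phi(x_i)$ and compute difference quotients of $\Gamma\circ\Phi^-$ to see they are not monotone. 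Finally, for the GCM identification: when $r$ is non-decreasing, $\Gamma \circ \Phi^-$ is already convex on $\n{Im}(\Phi)$, so it agrees with its own GCM on that set; since the GCM on $I$ linearly interpolates across the gaps of $\n{Im}(\Phi)$ and $\Gamma \circ \Phi^-$ does too (with the same slope $r(x)$), they coincide on all of $I$, and hence $\partial_- \n{GCM}_I(\Gamma \circ \Phi^-)(v) = (r \circ \Phi^-)(v)$ for $v \in \n{Im}(\Phi)$; evaluating at $v = \Phi(x)$ for $x \in \n{Supp}(\Phi)$ and using $\Phi^-(\Phi(x)) = x$ there yields the claim.

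The main obstacle I anticipate is the careful bookkeeping at the jumps of $\Phi$ and at the boundary of $\n{Im}(\Phi)$: verifying that $\Gamma \circ \Phi^-$ really is the absolutely-continuous ``fill-in'' described above, that $\Phi^-$ maps $\n{Im}(\Phi)$ into $\n{Supp}(\Phi)$ so that the hypotheses on $r$ apply, and that the left derivative of the GCM picks out exactly $r(x)$ at points of $\n{Supp}(\Phi)$ rather than some averaged value. The change-of-variables step $\int_{J \cap (-\infty, \Phi^-(v)]} r\, d\Phi = \int r \circ \Phi^-$ over the appropriate subset of $\n{Im}(\Phi)$ is standard but must be stated with enough care to handle the c\`adl\`ag, possibly-atomic $\Phi$. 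Once that representation is in hand, convexity and the GCM identification are essentially the one-dimensional calculus fact that a function is convex iff its derivative is non-decreasing.
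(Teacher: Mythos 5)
There is a genuine error at the heart of the proposal. The claim that ``$\Phi$ pushes $d\Phi$ forward to (a piece of) Lebesgue measure on $\n{Im}(\Phi)$'' is false whenever $\Phi$ has jumps: if $\Phi$ jumps at $x_0$ from $a := \Phi(x_0-)$ to $b := \Phi(x_0)$, then $d\Phi$ has an atom of mass $b-a$ at $x_0$, and the pushforward $\Phi_* \, d\Phi$ therefore has an atom of mass $b - a$ at the single point $b$, not Lebesgue measure spread over $(a,b)$. As a consequence, $\Gamma \circ \Phi^-$ is \emph{not} absolutely continuous on $I$, and it does \emph{not} interpolate linearly across the jump gap: since $\Phi^-(v) = x_0$ for every $v \in (a,b]$, the function $\Gamma \circ \Phi^-$ is constant equal to $\Gamma(x_0)$ on $(a,b]$, equals $\Gamma(x_0-)$ at $v = a$, and therefore has an upward jump of size $r(x_0)(b-a)$ at the left end of the gap. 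In particular $\Gamma \circ \Phi^-$ is not convex on $I$ in general --- it is only (potentially) convex on $\n{Im}(\Phi)$ in the combinatorial sense the paper uses --- and the very step of passing to the GCM is precisely what repairs these jumps by replacing them with linear segments of slope $r(x_0)$. Your later remark that ``$\Gamma\circ\Phi^-$ [interpolates linearly] with the same slope $r(x)$'' is describing the GCM, not $\Gamma\circ\Phi^-$ itself, and conflating the two collapses the distinction the lemma is built on.

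Because the absolute-continuity representation fails, the machinery ``convex $\Leftrightarrow$ derivative non-decreasing'' is not available, and both directions of your argument inherit the gap. For the ``only if'' direction, the correct route (and the one the paper takes) compares chord slopes of $\Gamma\circ\Phi^-$ at triples $t < u < v$ in $\n{Im}(\Phi)$, rewriting each slope as a $d\Phi$-weighted average of $r$ over the corresponding preimage interval and then using monotonicity of $r$ on $\n{Supp}(\Phi)$ (note $d\Phi$ is supported there, so you only ever see $r$ on $\n{Supp}(\Phi)$). For the ``if'' direction, your contraposition sketch (``find $v_1 < v_2 \in \n{Im}(\Phi)$ at which the derivative is strictly decreasing'') is far less routine than it sounds: $\n{Im}(\Phi)$ may be a Cantor-like set, an interval, a finite set, or any mixture, and whether a sequence of difference quotients of $\Gamma \circ \Phi^-$ actually converges to $r(x)$ depends delicately on whether $\Phi$ jumps at $x$, is locally flat on one side of $x$, or is strictly increasing near $x$. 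The paper's proof handles this with an explicit case analysis (its cases (y1)--(y3) and (x1)--(x3)), constructing sequences $w_j, z_j$ so that the chord slopes tend to $r(y)$ and $r(x)$; your proposal has no substitute for that step. Finally, the GCM identification also needs real work: one must actually prove that the piecewise-linear-interpolated extension $H$ of $\Gamma \circ \Phi^-|_{\n{Im}(\Phi)}$ is (a) convex, (b) a minorant of $\Gamma\circ\Phi^-$, and (c) dominates every other convex minorant --- none of which is automatic once you recognize that $\Gamma \circ \Phi^-$ and $H$ genuinely differ on the gaps.
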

\begin{proof}[\bfseries{Proof of Lemma~\ref{lemma:monotone_comp}}]
We first show that if $r$ is non-decreasing on $\n{Supp}(\Phi)$, then $\Gamma \circ \Phi^-$ is convex on $\n{Im}(\Phi)$. Suppose we have $t, u, v \in \n{Im}(\Phi)$, where $t < u < v$ and $u = \delta t + (1 - \delta) v \text{ for } \delta \in (0,1)$. Defining $R_{\Gamma, \Phi}(x) = \Gamma \circ \Phi^-(x)$, we have
\begin{align*}
    R_{\Gamma, \Phi} (v) - R_{\Gamma, \Phi} (u) 
    &= \int_{\Phi^-(u)}^{\Phi^-(v)} r(x)\, d\Phi(x) \\
    &\geq \left[r \circ \Phi^-(u) \right] \left[\Phi \circ \Phi^-(v) - \Phi\circ\Phi^-(u)\right] \\
    &\geq \left[\int_{\Phi^-(t)}^{\Phi^-(u)} r(x)\, d\Phi(x)\right] \left[\frac{\Phi \circ\Phi^-(v) - \Phi \circ\Phi^-(u)}{\Phi \circ \Phi^-(u) - \Phi \circ \Phi^-(t)}\right] \\
    &= \left[R_{\Gamma, \Phi} (u) - R_{\Gamma, \Phi} (t) \right]\left[ \frac{\Phi \circ \Phi^-(v) - \Phi \circ \Phi^-(u)}{\Phi \circ \Phi^-(u) - \Phi \circ \Phi^-(t)} \right].
\end{align*}
Now, for any $x \in \n{Im}(\Phi)$, $\Phi \circ \Phi^-(x) = x$. Thus, by plugging $u = \delta t + (1 - \delta) v$ into the above inequality, we have $R_{\Gamma, \Phi} (v) - R_{\Gamma, \Phi} (u) \geq \frac{\delta}{1 - \delta}[R_{\Gamma, \Phi} (u) - R_{\Gamma, \Phi} (t)]$, which implies that $R_{\Gamma, \Phi} (u) \leq \delta R_{\Gamma, \Phi} (t) + (1 - \delta)R_{\Gamma, \Phi} (v)$. Therefore, $\Gamma \circ \Phi^-$ is convex on $\n{Im}(\Phi)$.

Next, we show that if $r$ is continuous on $\mathscr{G} := \n{Supp}(\Phi)$, and $R_{\Gamma, \Phi} = \Gamma \circ \Phi^-$ is convex on $\n{Im}(\Phi)$, then $r$ is non-decreasing on $\mathscr{G}$.
The idea is to compare the slopes of chords of $r$ using the convexity of $R_{\Gamma, \Phi}$. Let $x, y \in \mathscr{G}$ be such that $x <y$. Suppose there exist two sequences $\{z_j\}_{j\geq1}$ and $\{w_j\}_{j\geq1}$ such that $\lim_{j\to\infty} s_j = r(x)$ and $\lim_{j \to \infty} t_j = r(y)$, where
\[s_j := \frac{R_{\Gamma, \Phi} \circ \Phi(x) - R_{\Gamma, \Phi} \circ \Phi(z_j)}{\Phi(x) - \Phi(z_j)} \text{ and } t_j := \frac{R_{\Gamma, \Phi}\circ \Phi(y) - R_{\Gamma, \Phi} \circ \Phi(w_j)}{\Phi(y) - \Phi(w_j)}.\] 
If $z_j \leq w_j$ for all $j$ large enough, then convexity of $R_{\Gamma, \Phi}$ on $\n{Im}(\Phi)$ would imply that $s_j \leq t_j$ for all $j$ large enough, and hence $r(x) \leq r(y)$. Hence, if we can find such sequences, we have established the claim. The slopes of the chords of $R_{\Gamma, \Phi}$ depend on the behavior of $\Phi$ near $x$ and $y$, each of which has three cases. We note that since $y \in \n{Supp}(\Phi)$ by assumption, exactly one of the following three situations must hold: (y1) $\Phi(y) > \Phi(y-)$ and there exists $p \in [x, y)$ such that $\Phi(p) = \Phi(y-)$, (y2) $\Phi(y) > \Phi(y-)$ but there's no $p \in [x, y)$ such that $\Phi(p) = \Phi(y-)$, and (y3) $\Phi(y) = \Phi(y-)$. There are three analogous cases for $x$: (x1) $\Phi(x) > \Phi(x-)$ and there exists $q < x$ such that $\Phi(q) = \Phi(x-)$, (x2) $\Phi(x) > \Phi(x-)$ but there's no such $q < x$,  and (x3) $\Phi(x) = \Phi(x-)$. We proceed by defining $\{z_j\}_{j\geq1}$ and $\{w_j\}_{j\geq1}$ in each case.


In case (y1), we let $w_j = p$ for all $j$. We can set $p = \Phi^- \circ \Phi(y-)$ and still have $\Phi(p) = \Phi(y-)$. We know that $p \in [x,y)$; if $p = x$ then $p$ is in $\s{G}$ by assumption, and if $p > x$, then $\Phi(u) < \Phi(p)$ for all $u \in [x,p)$ since $p = \Phi^- \circ \Phi(y-)$, in which case $p \in \s{G}$ as well. Hence, $p \in \s{G}$ necessarily. We therefore have
\[ t_j = \frac{R_{\Gamma,\Phi} \circ \Phi (y) - \Gamma(p)}{\Phi(y) - \Phi(p)} = \frac{r(y) [\Phi(y) - \Phi(p)]}{\Phi(y) - \Phi(p)} = r(y)\]
since $\Phi$ is by assumption flat on $(p, y)$ and has a jump at $y$.

In case (y2), we have $\Phi^- \circ \Phi(y-) = y$. Thus, there exists $\{w_j\}_{j\geq1}$ increasing to $y$ such that $w_j \in (x, y) \cap \mathscr{G}$ for each $j$ and $\Phi(w_j)$ increases to $\Phi(y-)$. Therefore, $\Phi^- \circ \Phi(w_j)$ converges to $\Phi^- \circ \Phi(y-) = y$. Hence $R_{\Gamma, \Phi} \circ \Phi(w_j)$ converges to $R_{\Gamma, \Phi} \circ \Phi(y-)$ since $\Gamma$ possesses left-limits, showing that $t_j$ converges to $[R_{\Gamma,\Phi}\circ \Phi(y) - R_{\Gamma,\Phi}\circ \Phi(y-)]/ [\Phi(y) - \Phi(y-)] = r(y)$. In case (y3), since $y \in \mathscr{G}$, there exists $\{w_j\}_{j\geq1}$ that either (y3a) increases to $y$ and $\Phi(w_j) < \Phi(y)$ for each $j$, or (y3b) decreases to $y$ and $\Phi(w_j) > \Phi(y)$ for each $j$. In case (y3a), since $\Phi \circ \pp{z} = \Phi(z)$ for all $z$, we have
\begin{align*}
    t_j &= \left[\int_{\pp{w_j}}^{\pp{y}} r(u) \, d\Phi(u)\right] /\left[\Phi(y) - \Phi(w_j)\right] \\
        &= r(y) \left[\Phi \circ \pp{y} - \Phi \circ \pp{w_j}\right]/\left[\Phi(y) - \Phi(w_j)\right] \\
        &\qquad + \left[\int_{\pp{w_j}}^{\pp{y}}[r(u) - r(y)] \, d\Phi(u) \right]/\left[\Phi(y) - \Phi(w_j)\right] \\
        &= r(y) + \left[ \int_{\pp{w_j}}^{\pp{y}}[r(u) - r(y)] \, d\Phi(u)\right] / \left[\Phi(y) - \Phi(w_j) \right]\ .
\end{align*}
By continuity of $r$ over $\mathscr{G}$, for any $\epsilon > 0$, we can find $m$ such that $j \geq m$ implies $|r(u) - r(y)| < \epsilon$ for all $u \in [w_j, y] \cap \mathscr{G}$. We then have
\begin{align*}
\left| t_j - r(y) \right| &\leq \left[\int_{\pp{w_j}}^{\pp{y}} \left|r(u) - r(y)\right| \, d\Phi(u) \right] / \left[\Phi(y) - \Phi(w_j)\right] \\
                          &\leq \frac{\epsilon[\Phi \circ \pp{y} - \Phi \circ \pp{w_j}]}{\Phi(y) - \Phi(w_j)} =\epsilon
\end{align*}
for all $j \geq m$, so $\lim_{j \rightarrow \infty} t_j = r(y)$. If (y3b) holds, then a similar argument shows that $\lim_{j \rightarrow \infty} t_j = r(y)$. Applying the same exact reasoning for the three cases for $x$, we see that $s_j$ converges to $r(x)$. 

We have now shown that there exist sequences $z_j$ and $w_j$ such that $s_j$ converges to $r(x)$ and $t_j$ converges to $r(y)$, where $s_j$ and $t_j$ are defined above. Hence, if $z_j \leq w_j$ for all $j$ large enough, then convexity of $R_{\Gamma, \Phi}$ on $\n{Im}(\Phi)$ implies that $s_j \leq t_j$ for all $j$ large enough, and hence $r(x) \leq r(y)$. It is clear that $z_j \leq w_j$ for all 16 pairings of definitions of $z_j$ and $w_j$ implied by cases (x1)--(x3b) and (y1)--(y3b) except for when $z_j$ decreases to $x$ (case x3b) and $w_j = p$ (case y1). In this case, we note that if $x = p$, then $\Phi$ is flat on $[x,y)$, so that case (x3b) cannot hold. Therefore, if $z_j$ decreases to $x$ and $w_j = p$, then $p$ must be strictly larger than $x$, so that $z_j < w_j$ for all $j$ large enough.

Next we prove the second claim of the lemma: if $r$ is continuous and non-decreasing on $\mathscr{G}$, then $r(x) = \partial_- \n{GCM}_{I}(\Gamma \circ \Phi^-) \circ \Phi(x)$ for all $x \in \mathscr{G}$, where $I$ is the smallest interval containing $\n{Im}(\Phi)$. We have proved that $R_{\Gamma, \Phi}$ is convex on $\n{Im}(G)$ under the stated conditions. First, we claim that $\n{GCM}_{I}(\Gamma \circ \Phi^-) = H$, where $H := I \to \d{R}$ has the following form. For any $u \in \n{Im}(\Phi), H(u) := R_{\Gamma, \Phi}(u)$. If $u \in I$ but $u \notin \n{Im}(\Phi)$, then there exists $x \in \d{R}$ and $\lambda \in [0, 1)$ such that $u = \lambda \Phi(x-) + (1 - \lambda) \Phi(x)$. We then define $H(u) = \lambda R_{\Gamma, \Phi}(\Phi(u-)-) + (1 - \lambda)R_{\Gamma, \Phi}(\Phi(u))$. Thus defined, $H$ is the linear interpolation of $R_{\Gamma, \Phi}|_{\n{Im}(\Phi)}$ to all of $I$. In order to show that $H = \n{GCM}_{I}(R_{\Gamma, \Phi})$, we need to show that (a) $H$ is convex, (b) $H \leq R_{\Gamma, \Phi}$ and (c) $H \geq \bar{H}$ for any other convex minorant $\bar{H}$ of $R_{\Gamma, \Phi}$.

For (a), let $u, v \in I$ and $p = \lambda u + (1 - \lambda)v$ for $\lambda \in (0, 1)$. Since $I$ is the smallest interval containing $\n{Im}(\Phi)$, there exist $u_1  \leq u_2 \leq p_1 \leq p_2 \leq v_1 \leq v_2$ that are all elements of $\n{Im}(\Phi)$ and $\lambda_S, \lambda_T, \lambda_3 \in [0, 1]$ such that $u = \lambda_Su_1 + (1-\lambda_S)u_2$, $v = \lambda_Tv_1 + (1-\lambda_T)v_2$, and $p = \lambda_3p_1 + (1-\lambda_3)p_2$, and such that $H(u) = \lambda_SR_{\Gamma, \Phi}(u_1-) + (1-\lambda_S)R_{\Gamma, \Phi}(u_2), H(v) = \lambda_TR_{\Gamma, \Phi}(v_1-) + (1-\lambda_T)R_{\Gamma, \Phi}(v_2)$, and $H(p) =\lambda_3R_{\Gamma, \Phi}(p_1-) + (1-\lambda_3)R_{\Gamma, \Phi}(p_2)$. (If $u \in \n{Im}(\Phi)$, then we set $u_1 = u_2 = u$ and $\lambda_S = 0$. Otherwise, we can find $u_1 < u < u_2$ with $u_1$ and $u_2$ in $\n{Im}(\Phi)$ exist since $I$ is the smallest interval containing $\n{Im}(\Phi)$, and such a $\lambda_S$ exists by the definition of $H$. We define $p_1, p_2, v_1, v_2$ similarly, and we can ensure that the stated ordering is satisfied since $u < p < v$.)

We define the points $U_1 := (u_1, H(u_1)), U_2 := (u_2, H(u_2)), ..., V_2:= (v_2, H(v_2)$. The convexity of $R_{\Gamma, \Phi}$ implies that the (possibly degenerate) line segment $\overline{P_1P_2}$ lies on or below $\overline{U_2V_1}$, which lies on or below  $\overline{U_2V}$, which lies on or below $\overline{UV}$. Since $P$ lies on $\overline{P_1P_2}$ and the point $(p, \lambda H(u) + (1-\lambda) H(v)$ lies on the line $\overline{UV}$, we have $H(p) \leq \lambda H(u) + (1-\lambda) H(v)$. Since $u, v$, and $\lambda$ were arbitrary, this implies that $H$ is convex.

For (b), if $u \in \n{Im}(\Phi)$, by definition $H(u) = R_{\Gamma, \Phi}(u)$. If $u \notin \n{Im}(\Phi)$, then since $u = \lambda \Phi(x-) + (1 - \lambda) \Phi(x)$ for some $x$, we must have $\Phi^-(u) = \Phi^- \circ \Phi(x) = x$. Consequently, by the convexity and continuity of $H$ on $\n{Im}(\Phi)$
\begin{align*}
    R_{\Gamma, \Phi}(u) &= \Gamma \circ \Phi^-(u) = \Gamma \circ \Phi^- \circ \Phi(x) =  R_{\Gamma, \Phi} \circ \Phi(x) \\
    &= \lambda R_{\Gamma, \Phi} \circ \Phi(x) + (1 - \lambda)R_{\Gamma, \Phi} \circ \Phi(x) \\
    &\geq \lambda R_{\Gamma, \Phi} \circ \Phi(x-) + (1 - \lambda)R_{\Gamma, \Phi} \circ \Phi(x) \\
    &= \lambda R_{\Gamma, \Phi}(\Phi(x-)-) + (1 - \lambda)R_{\Gamma, \Phi} \circ \Phi(x) \\
    &= H(u).
\end{align*}

For (c), if $\bar{H}$ is another convex minorant of $R_{\Gamma, \Phi}$, then $H(u) = R_{\Gamma, \Phi}(u) \geq \bar{H}(u)$ for all $u \in \n{Im}(\Phi)$. If $u \notin \n{Im}(\Phi)$, we have $u = \lambda \Phi(x-) + (1 - \lambda) \Phi(x)$ for some $x \in \n{Im}(\Phi)$. By convexity of $\bar{H}$ on $\n{Im}(\Phi)$ and since $\bar{H} \leq R_{\Gamma, \Phi}$ by assumption,
\begin{align*}
   \bar{H}(u) &= \bar{H}(\lambda \Phi(x-) + (1 - \lambda) \Phi(x)) \leq \lambda \bar{H} \circ \Phi(x-) + (1 - \lambda) \bar{H} \circ \Phi(x) \\
   &\leq \lambda R_{\Gamma, \Phi} \circ \Phi(x-) +  (1 - \lambda) R_{\Gamma, \Phi} \circ \Phi(x).
\end{align*}
If $\Phi(x-) \in \n{Im}(\Phi)$, then $R_{\Gamma, \Phi} \circ \Phi(x-) = R_{\Gamma, \Phi}(\Phi(x-)-)$ since $R_{\Gamma,\Phi}$ is continuous on $\n{Im}(\Phi)$, so the above equals
\begin{align*}
   &\lambda R_{\Gamma, \Phi}(\Phi(x-)-) + (1 - \lambda) R_{\Gamma, \Phi} \circ \Phi(x) = H(u).
\end{align*}
If $\Phi(x-) \notin \n{Im}(\Phi)$, then for all $\epsilon > 0$ there exists $z \in (\Phi(x-) - \epsilon, \Phi(x-))$ such that $z \in \n{Im}(G)$, since otherwise $\Phi$ would be flat to the left of $x$ and $\Phi(x-)$ would be in $\n{Im}(\Phi)$. We then have $\bar{H}(u) \leq \lambda(z)R_{\Gamma, \Phi}(z-) + (1 - \lambda(z))R_{\Gamma, \Phi}(\Phi(x))$, where $\lambda(z) \in (0, 1)$ and $\lambda(z) \to \lambda$ as $z \to \Phi(x-)$. Taking the limit of $z \to \Phi(x-)$, we have $\bar{H}(u) \leq \lambda R_{\Gamma, \Phi}(\Phi(x-)-) + (1 - \lambda) R_{\Gamma, \Phi}(\Phi(x)) = H(u)$.

Finally, we show that $r(x) = (\partial_- H) \circ \Phi (x)$ for all $x \in \mathscr{G}$. If $\Phi(x) > \Phi(x-)$, then for all $\lambda \in (0, 1)$ and $u = \lambda \Phi(x-) + (1 - \lambda) \Phi(x)$, we have 
\[H(u) = \lambda R_{\Gamma, \Phi}(\Phi(x-)-) + (1 - \lambda) R_{\Gamma, \Phi}(\Phi(x)) = \lambda \Gamma(x-) + (1 - \lambda) \Gamma(x).\]
Thus $(\partial_- H)(u) = [\Gamma(x) - \Gamma(x-)] / [\Phi(x) - \Phi(x-)] = r(x)$. If $\Phi(x) = \Phi(x-)$, then $H(u) = R_{\Gamma, \Phi}(\Phi(x))$ and it's clear that $(\partial_- H)(u) = r(x)$.
\end{proof}

\begin{proof}[\bfseries{Proof of Theorem~\ref{thm:mhr_characterization}}]
We note that $\theta = (dF_S / dF_T) / (\bar{F}_{S,-} / \bar{F}_{T,-}) = d\Lambda_S / d\Lambda_T$ since $d\Lambda_S(u) = d F_S(u) / \bar{F}_{S,-}(u)$ and $d\Lambda_T(u) = dF_T(u) / \bar{F}_{T,-}(u)$. In addition, $\n{Supp}(F_T) = \n{Supp}(\Lambda_T)$. Therefore, we can write $\Lambda_S(t) = \int_{0}^t \theta(u)\, d\Lambda_T(u)$. Hence, (1) $\iff$ (3) and (b) follow by Lemma~\ref{lemma:monotone_comp} with $\Gamma = \Lambda_S$, $\Phi = \Lambda_T$, and $r = \theta$.

It remains to show (1) $\iff$ (2). By Lemma~\ref{lemma:monotone_comp} with $r = \theta$, and $\Phi = F_T$, $\theta$ is nondecreasing on $\n{Supp}(F_T)$ if and only if
\[ u \mapsto \int_{-\infty}^{F_T^-(u)} \theta \, dF_T\]
is convex on $\n{Im}(F_T)$. Hence, if 
\[\int_{[0,u)} \frac{1-v}{\bar{R}(v)}\, dR_+(v) = \int_{-\infty}^{F_T^-(u)}\theta(v) \, dF_T(v),\]
then (1) $\iff$ (2). We can write 
\[ \int_{[0,u)} \frac{1-v}{\bar{R}(v)}\, dR_+(v) = \int_{[0,u)} \frac{1-v}{\bar{R}(v)}\, dR_+^c(v) + \sum_{v < u} \frac{1-v}{\bar{R}(v)} (\Delta R_+)(v) \]
for $R_+^c(u) = R_+(u) - \sum_{v \leq u}  (\Delta R_+)(v)$ the continuous part of $R_+$. We address the discrete and continuous parts of the integral in turn.

We note that $\Delta R_+(v) > 0$ if and only if $R_+(v) > R(v)$, which implies that $F_T^-(v+) = F_{T,+}^-(v) > F_T^-(v)$, since otherwise $R_+(v) = F_S \circ F_{T,+}^-(v) = F_S \circ F_T^-(v) = R(v)$. Furthermore $F_{T,+}^-(v) > F_T^-(v)$ if and only if $F_T$ is flat on $[F_T^-(v), F_{T,+}^-(v))$, which implies $F_S$ is too. But if  $R_+(v) > R(v)$ then $F_S\circ F_{T,+}^-(v) > F_S \circ F_T^-(v)$, which implies $F_S$, and therefore $F_T$ as well, have jumps at $F_{T,+}^-(v)$. Hence, $R_+(v) > R(v)$ if and only if $F_S$ and $F_T$ are both flat on $[F_T^-(v), F_{T,+}^-(v))$ and have a jump at $F_{T,+}^-(v)$, and $v = F_T \circ F_T^-(v) = F_{T,-} \circ F_{T,+}^-(v)$ and $F_S\circ F_T^-(v) = F_{S,-} \circ F_{T,+}^-(v)$. Therefore, for any $v$ with $\Delta R_+(v) > 0$,
\begin{align*}
\frac{1-v}{\bar{R}(v)} \Delta R_+(v) &= \frac{1-v}{1-F_S\circ F_T^-(v)} \left[ F_S \circ F_{T,+}^-(v) - F_S\circ F_T^-(v)\right] = \frac{1-F_{T,-} \circ F_{T,+}^-(v)}{1 - F_{S,-}\circ F_{T,+}^-(v)} (\Delta F_S) \circ F_{T,+}^-(v) \\
&= \frac{(\Delta F_S) \circ G_{T,+}^-(v)}{\bar{F}_{S,-} \circ F_{T,+}^-(v)} \frac{\bar{F}_{T,-} \circ F_{T,+}^-(v)}{(\Delta F_T) \circ F_{T,+}^-(v)} (\Delta F_T) \circ F_{T,+}^-(v) = \frac{\lambda_S \circ F_{T,+}^-(v)}{\lambda_T \circ F_{T,+}^-(v)} (\Delta F_T) \circ F_{T,+}^-(v).
\end{align*}
Therefore,
\begin{align*}
    \sum_{v < u} \frac{1-v}{\bar{R}(v)} \Delta R_+(v) &= \sum_{v < u}  \frac{\lambda_S \circ F_{T,+}^-(v)}{\lambda_T \circ F_{T,+}^-(v)} (\Delta F_T) \circ F_{T,+}^-(v) = \sum_{t < F_{T,+}^-(u)}  \frac{\lambda_S(t)}{\lambda_T(t)} \Delta F_T(t) \\
    &= \sum_{t \leq F_T^-(u)}  \frac{\lambda_S(t)}{\lambda_T(t)} \Delta F_T(t) ,
\end{align*}  
where the last equality follows because $t < F_{T,+}^-(u)$ if and only if $t \leq F_T^-(u)$.

We now address the continuous part of the integral. Using the fact derived above that $F_S \circ F_T^-(v) = F_{S-} \circ F_{T+}^-(v)$ for any $v$ with $\Delta R_+(v) > 0$, we have
\begin{align*}
    R_+^c(u) &= R_+(u) - \sum_{v\leq u} (\Delta R_+)(v) = F_S \circ G_{T,+}^-(v) - \sum_{v \leq u} \left[ F_S \circ G_{T,+}^-(u) - F_S \circ F_T^-(u) \right] \\
    &=  F_S \circ F_{T,+}^-(v) - \sum_{v \leq u} \left[ F_S \circ F_{T,+}^-(u) - F_{S,-} \circ F_{T,+}^-(u) \right] =  F_S \circ F_{T,+}^-(v) - \sum_{v \leq u} (\Delta F_S) \circ  F_{T,+}^-(v) \\
    &= F_S^c \circ F_{T,+}^-(u).
\end{align*}
Therefore,
\[ \int_{[0,u)} \frac{1-v}{\bar{R}(v)}\, dR_+^c(v) = \int_{[0,u)} \frac{1-v}{\bar{F}_S \circ F_T^-(v)}\, (F_S^c \circ F_{T,+}^-)(dv).\]
We then note that $v \in \n{Supp}(R_+^c)$ implies that $v \in \n{Supp}(F_{T,+}^-)$, and hence $v = F_T \circ F_{T,+}^-(v)$ unless $v$ is at the left end of a flat of $F_{T,+}^-$. Such points form a $R_+^c$ measure zero set. Similarly, if $\Delta R_+(v) = 0$, then $F_{T,+}^-(v) = F_T^-(v)$, and $v$ such that $\Delta R_+(v) > 0$ form a $R_+^c$ measure zero set. Therefore, we have
\[ \int_{[0,u)} \frac{1-v}{\bar{F}_S \circ F_T^-(v)}\, (F_S^c \circ F_{T,+}^-)(dv) = \int_{[0,u)} \frac{\bar{F}_T \circ F_{T,+}^-(v)}{\bar{F}_S \circ F_{T,+}^-(v)}\, (F_S^c \circ F_{T,+}^-)(dv).\]
Now we note that $F_{T,+}^-$ is strictly increasing on the support of $R_+^c$, so by the change of variables $y = F_{T,+}^-(v)$, we have
\begin{align*}
    \int_{[0,u)} \frac{1-v}{\bar{R}(v)}\, dR_+^c(v) &= \int_{[0,u)}  \frac{\bar{F}_T \circ F_{T,+}^-(v)}{\bar{F}_S \circ F_{T,+}^-(v)}\, (F_S^c \circ F_{T,+}^-)(dv)  = \int_{[F_{T,+}^-(0),F_{T,+}^-(u))} \frac{\bar{F}_T(y)}{\bar{F}_S(y)}\, F_S^c(dy) \\
    &= \int_{[F_{T,+}^-(0),F_{T,+}^-(u))}\frac{\bar{F}_T(y)}{\bar{F}_S(y)}\, \frac{dF_S^c}{dF_T^c}(y) \, dF_T^c(y).
\end{align*}
Now except on a $F_T^c$-measure zero set, $\bar{F}_T = \bar{F}_{T,-}$, $\bar{F}_S = \bar{F}_{S,-}$, and $\frac{dF_S^c}{dF_T^c} = \frac{dF}{dG}$. Therefore, 
\[ \int_{[0,u)} \frac{1-v}{\bar{R}(v)}\, dR_+^c(v)  = \int_{[F_{T,+}^-(0),F_{T,+}^-(u))}\frac{\bar{F}_{T,-}(y)}{\bar{F}_{S,-}(y)}\, \frac{dF_S}{dF_T}(y) \, dF_T^c(y) = \int_{[F_{T,+}^-(0),F_{T,+}^-(u))}\frac{ d\Lambda_S}{d\Lambda_T}(y) \, dF_T^c(y). \]
Finally, we note that if $F_{T,+}^-(0) > 0$, then $F_T^c$ is flat on $(-\infty,F_{T,+}^-(0))$, and $y < F_{T,+}^-(u)$ if and only if $y \leq F_T^-(u)$, so
\[\int_{[F_{T,+}^-(0),F_{T,+}^-(u))}\frac{ d\Lambda_S}{d\Lambda_T}(y) \, dF_T^c(y) = \int_{-\infty}^{F_T^-(u)}\frac{ d\Lambda_S}{d\Lambda_T}(y) \, dF_T^c(y).\]
Putting together the discrete and continuous parts of the integral, we now have 
\begin{align*}
    \int_{[0,u)} \frac{1-v}{\bar{R}(v)}\, dR_+(v) &= \int_{-\infty}^{F_T^-(u)}\frac{ d\Lambda_S}{d\Lambda_T}(y) \, dF_T^c(y) + \sum_{t \leq F_T^-(u)}  \frac{\lambda_S(t)}{\lambda_T(t)} \Delta F_T(t) = \int_{-\infty}^{F_T^-(u)}\frac{ d\Lambda_S}{d\Lambda_T}(y) \, dF_T(y).
\end{align*}
\end{proof}



We denote $\d{P}_n$ as the empirical distribution of $O_1, \dotsc, O_n$, $P$ as the true distribution of $O_i$ (as implied by $F_S$, $F_T$, $F_U$, $F_V$, and $\pi$), and $\d{G}_n = n^{1/2}\left( \d{P}_n - P\right)$. For any probability distribution $Q$ and $Q$-integrable function $h$, we denote $Qh := \int h \, dQ$. We also let $\pi_n := \sum_{i=1}^n A_i / n$ be the observed fraction of treated units.

\begin{proof}[\bfseries{Proof of Theorem~\ref{thm: convergence in distribution}}]

To prove Theorem~\ref{thm: convergence in distribution}, we will use Theorem~4 of \cite{westling2020unified}. For convenience, we refer to \cite{westling2020unified} as WC hereafter. In the notation of WC, we have $\Gamma_n = \Lambda_{S,n}$, $\Phi_n = \Lambda_{T,n}$, $\Gamma_0 = \Lambda_S$, and $\Phi_0 = \Lambda_T$. To use Theorem~4 of WC, we need to first verify that the decomposition of equation (2) of WC holds, and then verify conditions (B1) -- (B5) and (A4) -- (A5) of WC. We establish each of these in turn below.

{\bf Equation (2) of WC.} 
We define the influence function $D_x^*$ and $L_x^*$ of $\Lambda_{S,n}(x)$ and $\Lambda_{T,n}(x)$ as 
 \begin{align*}
        D_{x}^*(y, \delta, a) &= \frac{a}{\pi}\left[ \frac{I(y \le x, \delta = 1)}{\bar{F}_S(y-)\bar{F}_U(y-)} - \int_0^{x \wedge y} \frac{d\Lambda_{S}(v)}{\bar{F}_S(v-)\bar{F}_U(v-)} \right] \ , \text{ and}\\
         L_{x}^*(y, \delta, a) &= \frac{1-a}{1-\pi}\left[ \frac{I(y \le x, \delta = 1)}{\bar{F}_T(y-)\bar{F}_V(y-)} - \int_0^{x \wedge y} \frac{d\Lambda_{T}(v)}{\bar{F}_T(v-)\bar{F}_V(v-)} \right].
\end{align*}
By adding and subtracting terms, we have $\Lambda_{S,n}(x) - \Lambda_{S}(x) = \d{P}_n D_{x}^* + H_{x,n}$ and $\Lambda_{T,n}(x) - \Lambda_{T}(x) = \d{P}_n L_{x}^* + R_{x,n}$, where 
\begin{align*}
        H_{x,n} &= \Lambda_{S,n}(x) -\Lambda_{S}(x) - \d{P}_n D_{x}^* \ , \text{ and} \\
        R_{x,n} &= \Lambda_{T,n}(x) -\Lambda_{T}(x) - \d{P}_n L_{x}^*.
    \end{align*}
These are the functions corresponding to equation
(2) of WC.

{\bf Condition (B1).} We define the local difference function $g_{x, u}$
\[g_{x, u} := D_{x+u}^* - D_{x}^* - \theta_0(x) \left[ L_{x+u}^* - L_{x}^* \right].\]
To verify condition (B1), we need to bound the uniform entropy of the class $\{ g_{x,u} : |u| \leq R\}$ for all $R$ small enough. We further decompose $g_{x,u} = g_{x,u, S} - \theta_0(x)g_{x,u,T}$ for $g_{x,u,S} := D_{x+u}^* - D_{x}^*$ and $g_{x,u,T} := L_{x+u}^* - L_{x}^*$. The function $g_{x, u, S}$ can be written as
\[
    g_{x, u, S} (y, \delta, a) = \frac{a}{\pi}\left[\frac{\left \{ I(y \leq x + u) - I(y \leq x)\right\}\delta }{\bar{F}_{S}(y-) \bar{F}_{V}(y-)} -\int \frac{I(v \leq y)  \left \{ I(v \leq x + u) - I(v \leq x)\right\} }{\bar{F}_{S}(v-) \bar{F}_{V}(v-)}d\Lambda_{S}(v)\right].
\]
The class of functions $\s{C}_{x,R} := \{ y \mapsto I(y \leq x + u) : |u| \le R\}$ is  Vapnik-\u{C}ervonenkis (VC) with index 2 (see, e.g.\ Lemma~2.6.16 of \citealp{van1996weak}). The class
\[ \left\{ (y,\delta, a) \mapsto  \frac{a\left \{ I(y \leq x + u) - I(y \leq x)\right\}\delta }{\pi\bar{F}_{S}(y-) \bar{F}_{V}(y-)}: |u| \le R\right\}\]
is a Lipschitz transformation of $\s{C}_{x,R}$ and various fixed square-integrable functions, so it is also VC, and hence easily satisfies condition (B1b) of WC. In conjunction with Lemma 5.2 of \cite{van2006estimating}, this also implies that the class 
\[ \left\{ (y,\delta, a) \mapsto  \frac{a}{\pi}\int \frac{I(v \leq y)  \left \{ I(v \leq x + u) - I(v \leq x)\right\} }{\bar{F}_{S}(v-) \bar{F}_{V}(v-)}d\Lambda_{S}(v): |u| \le R\right\}\]
is VC. Hence, $\{ g_{x,u,S} : |u| \leq R\}$ satisfies (B1b). By the analogous forms of $g_{x,u,S}$ and $g_{x,u,T}$, an identical argument shows that $\{ g_{x,u,T} : |u| \leq R\}$ satisfies (B1b), so $\{g_{x,u} : |u| \leq R\}$ does as well.

{\bf Condition (B2).}  An envelope function for $\{ g_{x,u,S} : |u| \leq R\}$ is given by $\hat{G}_{x, R, S} = \hat{G}_{x, R, S, 1} + \hat{G}_{x, R, S, 2}$ for 
\begin{align*}
    \hat{G}_{x, R, S,1}(y, \delta, a) &=\frac{I(|y - x| \leq R) a \delta}{\pi\bar{F}_{S}(y-) \bar{F}_{V}(y-)}\\
    \hat{G}_{x, R, S,2}(y, \delta, a) &= \frac{a}{\pi} \int_{x-R}^{x+R} I(0 < v \leq y) \frac{d\Lambda_{S}(v)}{\bar{F}_{S}(v-) \bar{F}_{V}(v-)}.
\end{align*}
We now verify that $P_0\hat{G}_{x, R, S}^2 = \boundeddet(R)$ as $R \rightarrow 0$ under the stated conditions. 
Due to the boundedness of $\bar{F}_{S}$ and $\bar{F}_{V}$ away from zero and independent censoring,
\begin{align*}
    P_0\hat{G}_{x, R, S,1}^2 &\le  C P(| Y - x| \le R, \Delta = 1 \mid A = 1) \\
    &= C\int_{x - R}^{x + R} \bar{F}_U(u-) \, dF_S(u) \\
    &\leq C'[F_S(x + R) - F_S(x - R)]
\end{align*}
for some $C, C' < \infty$. This latter expression is $\boundeddet(R)$ as $R \rightarrow 0$ because $F_S$ is continuously differentiable at $x$ with finite derivative.

Next, by Jensen's inequality, 
\begin{align*}
    P_0 \hat{G}_{x, R, S,2}^2  &= \pi^{-1} E \left[ \left\{ \int_{x-R}^{x+R} I(0 < v \leq Y) \frac{dF_{S}(v)}{\bar{F}_{S}^2(v-) \bar{F}_{V}(v-)}\right\}^2 \mid A = 1\right] \\
    &\le \pi^{-1} E \left[  \int_{x-R}^{x+R} I(0 < v \leq Y) \frac{dF_{S}(v)}{\bar{F}_{S}^4(v-) \bar{F}_{V}^2(v-)}  \mid A = 1\right] \\
    &= \pi^{-1}  \int_{x-R}^{x+R} P(Y \geq v \mid A= 1) \frac{dF_{S}(v)}{\bar{F}_{S}^4(v-) \bar{F}_{V}^2(v-)}  \\
    &= \pi^{-1}  \int_{x-R}^{x+R} \frac{dF_{S}(v)}{\bar{F}_{S}^3(v-) \bar{F}_{V}(v-)}.
\end{align*}
As above, the last expression is $\boundeddet(R)$ as $R \rightarrow 0$ because $F_S$ is Lipschitz at $x$ and $\bar{F}_S$ and $\bar{F}_U$ are both positive in a neighborhood of $x$. By the triangle inequality, we then have $P_0 \hat{G}_{x, R, S}^2  = \boundeddet(R)$.

For the second part of condition (B2), we note that since $\bar{F}_S$ and $\bar{F}_U$ are both positive in a neighborhood of $x$, $\hat{G}_{x, R, S}$ is uniformly bounded for all $R$ small enough. Hence, for all $\eta$, for all $R$ small enough (possibly depending on $\eta$), $\{R\hat{G}_{x, R, S} > \eta\}$ is identically 0, which implies in particular that $P_0 \left(\hat{G}_{x, R, S}^2\{R\hat{G}_{x, R, S} > \eta\}\right) = \fasterthandet(R)$. Identical analysis applies to an envelope for $\{ g_{x,u,T} : |u| \leq R\}$.

{\bf Condition (B3).} This condition concerns properties of the covariance function defined as $\Sigma(s, t) := P[D_{s}^* - \theta(x)L_{s}^*][D_{t}^* - \theta(x)L_{t}^*]$. Since $a(1-a) = 0$ for $a \in \{0,1\}$, $D_s^* L_t^* = 0$ for any $s,t$. Hence, $\Sigma(s, t) = P[ D_s^* D_t^*] +\theta(x)^2 P [ L_s^* L_t^*]$. We write
\begin{align}
    P\left[D_{s}^* D_{t}^* \right] &= \pi^{-1} E\left[\frac{I(Y \leq s \wedge t, \Delta = 1)}{\bar{F}_S(Y-)^2 \bar{F}_U(Y-)^2} \mid A = 1\right] \nonumber\\
    &\qquad - \pi^{-1} E \left[ \frac{I(Y \leq s, \Delta = 1)}{\bar{F}_S(Y-) \bar{F}_U(Y-)} \int_0^{t \wedge Y} \frac{d\Lambda_S(v)}{\bar{F}_S(v-) \bar{F}_U(v-)}  \mid A = 1 \right] \nonumber \\
    &\qquad -\pi^{-1} E \left[ \frac{I(Y \leq t, \Delta = 1)}{\bar{F}_S(Y-) \bar{F}_U(Y-)} \int_0^{s \wedge Y} \frac{d\Lambda_S(v)}{\bar{F}_S(v-) \bar{F}_U(v-)}   \mid A = 1 \right] \nonumber \\
    &\qquad + \pi^{-1} E \left[  \int_0^{s \wedge Y} \frac{d\Lambda_S(v)}{\bar{F}_S(v-) \bar{F}_U(v-)}\int_0^{t \wedge Y} \frac{d\Lambda_S(v) }{\bar{F}_S(v-) \bar{F}_U(v-)} \mid A = 1 \right].\label{eq:PDsDt}
\end{align}
We will address each term in this expansion. We first have
\begin{align*}
    E \left[\frac{ I(Y \leq s \wedge t, \Delta = 1)}{\bar{F}_S(Y-)^2 \bar{F}_U(Y-)^2} \mid A = 1\right] &=  \int_0^{s\wedge t} \frac{ d\Lambda_S(v)}{\bar{F}_S(v-) \bar{F}_U(v-)}.
\end{align*}
Next, the second term can be simplified as follows
\begin{align*}
  E\left[ \frac{I(Y \leq s, \Delta = 1)}{\bar{F}_S(Y-) \bar{F}_U(Y-)} \int_0^{t \wedge Y} \frac{d\Lambda_S(v)}{\bar{F}_S(v-) \bar{F}_U(v-)}  \mid A = 1 \right] &= \int_0^s \int_0^{t \wedge y} \frac{d\Lambda_S(v)}{\bar{F}_S(v-) \bar{F}_U(v-)}  \, d\Lambda_S(y) \\
  &= \int_0^{s \wedge t} \int_v^{s} \, d\Lambda_S(y) \frac{d\Lambda_S(v)}{\bar{F}_S(v-) \bar{F}_U(v-)}   \\
  &=\int_0^{s \wedge t} \frac{\Lambda_S(s) - \Lambda_S(v)}{\bar{F}_S(v-) \bar{F}_U(v-)} \, d\Lambda_S(v).
\end{align*}
Similarly, the third term can be written as
\begin{align*}
  E \left[ \frac{I(Y \leq t, \Delta = 1)}{\bar{F}_S(Y-) \bar{F}_U(Y-)} \int_0^{s \wedge Y} \frac{d\Lambda_S(v)}{\bar{F}_S(v-) \bar{F}_U(v-)}   \mid A = 1 \right]  &= \int_0^{s \wedge t} \frac{\Lambda_S(t) - \Lambda_S(v)}{\bar{F}_S(v-) \bar{F}_U(v-)} \, d\Lambda_S(v).
\end{align*}
Finally, for the last term, by decomposing the double integral into two regions, we can write
\begin{align*}
    &E\left[  \int_0^{s \wedge Y} \frac{d\Lambda_S(v)}{\bar{F}_S(v-) \bar{F}_U(v-)}\int_0^{t \wedge Y} \frac{d\Lambda_S(v) }{\bar{F}_S(v-) \bar{F}_U(v-)} \mid A = 1 \right] \\
    &\qquad= \int_0^s \int_0^t \frac{\bar{F}_S(u \vee v -) \bar{F}_U(u \vee v -)}{\bar{F}_S(u-) \bar{F}_U(u-) \bar{F}_S(v-) \bar{F}_U(v-)} \, d\Lambda_S(u)d\Lambda_S(v) \\
    &\qquad= \int_0^{s \wedge t} \frac{\Lambda_S(t) -\Lambda_S(v)}{\bar{F}_S(v-) \bar{F}_U(v-)} \, d\Lambda_S(v) + \int_0^{s \wedge t} \frac{\Lambda_S(s) -\Lambda_S(v)}{\bar{F}_S(v-) \bar{F}_U(v-)} \, d\Lambda_S(v).
\end{align*}
Hence, the second through fourth terms in the decomposition of $P[ D_s^* D_t^*]$ provided in equation~\eqref{eq:PDsDt} cancel, and we are left with
\[ P[ D_s^* D_t^* ]= \pi^{-1} \int_0^{s\wedge t} \frac{ d\Lambda_S(v)}{\bar{F}_S(v-) \bar{F}_U(v-)}.\]
By symmetry of $D_s^*$ and $L_s^*$, we also have 
\[ P[ L_s^* L_t^* ] = (1-\pi)^{-1} \int_0^{s\wedge t} \frac{ d\Lambda_T(v)}{\bar{F}_T(v-) \bar{F}_V(v-)}.\]
Thus, we can write 
\[ \Sigma(s,t) = \int_0^{s\wedge t}\left[\frac{\lambda_S(v)}{\pi\bar{F}_S(v-) \bar{F}_U(v-)} +\frac{ \theta(x)^2\lambda_T(v)}{(1-\pi) \bar{F}_T(v-) \bar{F}_V(v-)}  \right] \, dv. \]
In the notation of condition (B3) of WC, we have $\Sigma^*(s,t) = 0$, 
\[A(s,t,v,w) = \frac{\lambda_S(v)}{\pi\bar{F}_S(v-) \bar{F}_U(v-)} +\frac{ \theta(x)^2\lambda_T(v)}{(1-\pi) \bar{F}_T(v-) \bar{F}_V(v-)},\]
$H(v,w) = v$, and $Q$ can be taken as any probability measure since there are no covariates.

Sub-conditions (B3a) and (B3d) are automatically satisfied. Sub-condition (B3b) is satisfied because $A$ does not depend on $s$ or $t$. Sub-condition (B3c) requires that $v \mapsto A(x,x,v,w)$ be continuous at $v = x$, which would appear to require that $F_U$ and $F_V$ are continuous at $x$.  However, in the proof of Theorem~4 on pages 4 and 5 of the Supplementary Material of WC, it is actually only used that $v \mapsto A(x,x,v,w)$ possesses a right-limit $A(x,x,x+,w)$ as $v$ approaches $x$ from above (since $\alpha$ in the proof stands in for $n^{-1/3}$, which is positive), in which case $A_0(x,x,x+,w)$ should take the place of $A(x,x,x,w)$ in the result.  This is important in our work because in many applications the censoring distributions $F_U$ and $F_V$ possess mass points. Hence, this weaker version of sub-condition (B3c) holds with 
\begin{align*}
    A(x,x,x+,w) &= \frac{\lambda_S(x)}{\pi\bar{F}_S(x) \bar{F}_U(x)} +\frac{ \theta(x)^2\lambda_T(x)}{(1-\pi) \bar{F}_T(x) \bar{F}_V(x)} \\
    &= \theta(x) \left[ \frac{\lambda_T(x)}{\pi\bar{F}_S(x) \bar{F}_U(x)} + \frac{\lambda_S(x)}{(1-\pi)\bar{F}_T(x) \bar{F}_V(x)}\right].
\end{align*} 
Thus, condition (B3) holds, and the scale parameter is
\begin{align*}
    \kappa(x) &= \int A(x,x,x+, w) H'(x, w) Q(dw) =\theta(x) \left[ \frac{\lambda_T(x)}{\pi\bar{F}_S(x) \bar{F}_U(x)} + \frac{\lambda_S(x)}{(1-\pi)\bar{F}_T(x) \bar{F}_V(x)}\right].
\end{align*}

{\bf Conditions (B4) and (B5).}  For these two conditions, we define $\tilde{H}_{u, n} := H_{x+u, n} - H_{x, n}$, $\tilde{R}_{u, n} := R_{x+u, n} - R_{x, n}$ and $K_n(\delta) := n^{2/3}\sup_{|u| \leq \delta n^{-1/3}}\left|\tilde{H}_{u, n} - \theta(x)\tilde{R}_{u, n}\right|$. For (B4), we need to show that $K_n(\delta) = \fasterthan(1)$ for each $\delta > 0$, and for (B5), we need to show that for some $\alpha \in (1,2)$, $\delta \mapsto \delta^{-\alpha} E[ K_n(\delta)]$ is decreasing for all $n$ large enough and $\delta$ small enough. As above, we only verify the conditions for $\tilde{H}_{u,n}$, since verification for $\tilde{R}_{u,n}$ is completely analogous.

We define
\begin{align*}
    \tilde{F}_S(x) &:= P(Y \leq x, \Delta = 1 \mid A = 1)\ ,\\ \tilde{R}_S(x) &:= P(Y \geq x \mid A = 1) \ ,\\
    \tilde{F}_{S, n}(x) &= \d{P}_n(Y \leq x, \Delta = 1 \mid A = 1) = \frac{1}{n\pi_n}\sum_{i=1}^n I(Y_i \leq x, \Delta_i = 1, A_i = 1) \ , \\
    \tilde{R}_{S, n}(x) &:= \d{P}_n(Y \geq x \mid A = 1) = \frac{1}{n\pi_n} \sum_{i=1}^n I(Y_i \geq x, A_i =1).
\end{align*}
Then by the definition of the Nelson-Aalen estimator and the definition of the cumulative hazard function with independent right censoring, we can write
\begin{align*}
    H_{x,n} &= \Lambda_{S, n}(x) - \Lambda_S(x) - \d{P}_nD_x^* \\
    &= \int_0^x \frac{d\tilde{F}_{S, n}}{\tilde{R}_{S, n}} - \int_0^x \frac{d\tilde{F}_S}{\tilde{R}_S} - \int\left\{\frac{a}{\pi}\left[\frac{I(y \leq x, \delta = 1)}{\tilde{R}_{S}(y)} - \int_0^{x \wedge y}\frac{d\tilde{F}_S}{\tilde{R}_S^2}\right]\right\} \, d\d{P}_n(y,\delta, a)  \\
    &= \int_0^x\left[ \frac{d\tilde{F}_{S, n}}{\tilde{R}_{S, n}} - \frac{d\tilde{F}_S}{\tilde{R}_S} - \frac{\pi_n}{\pi}\frac{d\tilde{F}_{S, n}}{\tilde{R}_{S}} + \frac{\pi_n}{\pi} \frac{\tilde{R}_{S, n}}{\tilde{R}_S^2} \, d\tilde{F}_S\right] \\
    &= \int_0^x \left[\frac{1}{\tilde{R}_{S, n}} - \frac{\pi_n}{\pi\tilde{R}_{S}}\right]\, d\tilde{F}_{S, n}  - \int_0^x \left[\frac{1}{\tilde{R}_S} - \frac{\pi_n}{\pi} \frac{\tilde{R}_{S, n}}{\tilde{R}_S^2} \right] \, d\tilde{F}_S \\
    &= \int_0^x \left[\frac{\pi\tilde{R}_{S} - \pi_n\tilde{R}_{S, n}}{\pi\tilde{R}_{S}\tilde{R}_{S, n}}\right]\, d(\tilde{F}_{S, n} - \tilde{F}_S) - \int_0^x \left[\frac{\pi\tilde{R}_{S} - \pi_n\tilde{R}_{S, n}}{\pi\tilde{R}_{S}\tilde{R}_{S, n}}\right] \left[\frac{\tilde{R}_{S, n}}{\tilde{R}_S} - 1\right] \, d\tilde{F}_S\\
    &= \int_0^x \left[\frac{\pi\tilde{R}_{S} - \pi_n\tilde{R}_{S, n}}{\pi\tilde{R}_{S}\tilde{R}_{S, n}}\right]\, d(\tilde{F}_{S, n} - \tilde{F}_S) + \int_0^x \frac{\left[\pi_n\tilde{R}_{S, n} - \pi\tilde{R}_{S} \right]^2}{\pi^2\tilde{R}_{S}^2\tilde{R}_{S, n}} \, d\tilde{F}_S \\
    &\qquad+ \left[ \pi_n^{-1} - \pi^{-1} \right] \int_0^x \frac{\pi_n\left[\pi_n\tilde{R}_{S, n} - \pi\tilde{R}_{S} \right]}{\pi \tilde{R}_S^2} \, d\tilde{F}_S.
\end{align*}
We let $a_{n,\delta} := x-\delta n^{-1/3}$ and $b_{n,\delta} := x+\delta n^{-1/3}$. We note that $\pi > 0$ and that $\tilde{R}_S$ and $\tilde{R}_{S,n}$ are bounded away from zero in a neighborhood of $x$ almost surely for all $n$ large enough. Hence, there is a constant C such that almost surely for all $n$ large enough and $\delta$ small enough,
\begin{align*}
    &E\left[ \sup_{|u| \leq \delta n^{-1/3}} |H_{x+u, n} - H_{x, n}| \right] \\
    &\qquad= E\left[ \sup_{|u| \leq \delta n^{-1/3}} \left| \int_x^{x+u} \left[\frac{\pi\tilde{R}_{S} - \pi_n\tilde{R}_{S, n}}{\pi\tilde{R}_{S}\tilde{R}_{S, n}}\right]\, d(\tilde{F}_{S, n} - \tilde{F}_S)+ \int_x^{x+u} \frac{\left[\pi\tilde{R}_{S} - \pi_n\tilde{R}_{S, n}\right]^2}{\pi^2\tilde{R}_{S}^2\tilde{R}_{S, n}} \, d\tilde{F}_S   \right. \right. \\
    &\qquad\qquad \left.\left.+\left[ \pi_n^{-1} - \pi^{-1} \right] \int_x^{x+u} \frac{\pi_n\left[\pi_n\tilde{R}_{S, n} - \pi\tilde{R}_{S} \right]}{\pi \tilde{R}_S^2} \, d\tilde{F}_S \right| \right] \\
    &\qquad\leq E\left[ \int_{a_{n,\delta}}^{b_{n,\delta}} \left|\frac{\pi\tilde{R}_{S} - \pi_n\tilde{R}_{S, n}}{\pi\tilde{R}_{S}\tilde{R}_{S, n}}\right|\, d|\tilde{F}_{S, n} - \tilde{F}_S| \right]+  E\left[\int_{a_{n,\delta}}^{b_{n,\delta}} \frac{\left[\pi\tilde{R}_{S} - \pi_n\tilde{R}_{S, n}\right]^2}{\pi^2\tilde{R}_{S}^2\tilde{R}_{S, n}} \, d\tilde{F}_S \right] \\
    &\qquad\qquad +  E\left[\left| \pi_n^{-1} - \pi^{-1} \right| \int_{a_{n,\delta}}^{b_{n,\delta}} \frac{\pi_n\left|\pi_n\tilde{R}_{S, n} - \pi\tilde{R}_{S} \right|}{\pi \tilde{R}_S^2} \, d\tilde{F}_S \right] \\
    &\qquad\leq C\left\{ E\left[ \left\|\pi\tilde{R}_{S} - \pi_n\tilde{R}_{S, n} \right\|_{\infty, [a_{n,\delta}, b_{n,\delta}]} \left\| \tilde{F}_{S, n} - \tilde{F}_S \right\|_{TV, [a_{n,\delta}, b_{n,\delta}]} \right]\right. \\
    &\qquad\qquad+\left.  E\left[\left\|\pi\tilde{R}_{S} - \pi_n\tilde{R}_{S, n} \right\|_{\infty, [a_{n,\delta}, b_{n,\delta}]}^2\right]   +  E\left[\left| \pi_n^{-1} - \pi^{-1} \right| \left\|\pi\tilde{R}_{S} - \pi_n\tilde{R}_{S, n} \right\|_{\infty, [a_{n,\delta}, b_{n,\delta}]}\right]\right\}\\
    &\qquad\leq C\left\{ E\left[ \left\|\pi\tilde{R}_{S} - \pi_n\tilde{R}_{S, n} \right\|_{\infty, [a_{n,\delta}, b_{n,\delta}]}^2 \right] E \left[\left\| \tilde{F}_{S, n} - \tilde{F}_S \right\|_{TV, [a_{n,\delta}, b_{n,\delta}]}^2 \right]\right\}^{1/2} \\
    &\qquad\qquad+ CE\left[\left\|\pi\tilde{R}_{S} - \pi_n\tilde{R}_{S, n} \right\|_{\infty, [a_{n,\delta}, b_{n,\delta}]}^2\right]   +  C\left\{E\left[\left| \pi_n^{-1} - \pi^{-1} \right|^2\right] E\left[ \left\|\pi\tilde{R}_{S} - \pi_n\tilde{R}_{S, n} \right\|_{\infty, [a_{n,\delta}, b_{n,\delta}]}^2\right]\right\}^{1/2}.
\end{align*}
We address each term in turn. For any fixed $x$ and $\eta \geq 0$, we define the function class $\tilde{\mathcal{F}}_{S, \eta} := \{ (y, a) \mapsto I(y \geq u, a = 1) : u \in [x - \eta, x + \eta]\}$. The class $\tilde{\mathcal{F}}_{S,\eta}$ is uniformly bounded by 1 and $P_0$-Donsker for any $\eta$. Since $\pi\tilde{R}_{S}(x) = P(Y \geq x, A = 1)$ and $\pi_n\tilde{R}_{S, n}(x) = \d{P}_n(Y \geq x, A = 1)$, we can then write 
\[\left\{E \left[ \left\| \pi_n \tilde{R}_{S,n} - \pi \tilde{R}_S \right\|_{\infty, [a_{n,\delta}, b_{n,\delta}]}^2\right]\right\}^{1/2} = n^{-1/2} \left\{E \left[\left\{ \sup_{f \in \tilde{\mathcal{F}}_{S, \delta n^{-1/3}}} \left| \d{G}_n f  \right| \right\}^2\right]\right\}^{1/2} = \boundeddet(n^{-1/2}) .\]
Therefore, we also have
\[ E \left[ \left\| \pi_n \tilde{R}_{S,n} - \pi \tilde{R}_S \right\|_{\infty, [a_{n,\delta}, b_{n,\delta}]}^2\right] = \boundeddet(n^{-1}).\]
Next, since $\tilde{F}_S$ and $\tilde{F}_{S,n}$ are non-decreasing functions, we have
\begin{align*}
    \left\| \tilde{F}_{S, n} - \tilde{F}_S \right\|_{TV, [a_{n,\delta}, b_{n,\delta}]} &\leq \left\| \tilde{F}_{S, n}\right\|_{TV, [a_{n,\delta}, b_{n,\delta}]} + \left\| \tilde{F}_S \right\|_{TV, [a_{n,\delta}, b_{n,\delta}]} \\
    &= \left[\tilde{F}_{S, n}(b_{n,\delta}) - \tilde{F}_{S, n}(a_{n,\delta})\right] + \left[\tilde{F}_S(b_{n,\delta}) - \tilde{F}_S(a_{n,\delta})\right] \\
    &= \left[\tilde{F}_{S, n}(b_{n,\delta}) - \tilde{F}_S(b_{n,\delta})\right] - \left[\tilde{F}_{S, n}(a_{n,\delta}) - \tilde{F}_S(a_{n,\delta})\right] + 2\left[\tilde{F}_S(b_{n,\delta}) - \tilde{F}_S(a_{n,\delta})\right] \\
    &\leq 2 \left\|\tilde{F}_{S, n} - \tilde{F}_S\right\|_{\infty, [a_{n,\delta}, b_{n, \delta}]} + 2\left[\tilde{F}_S(b_{n,\delta}) - \tilde{F}_S(a_{n,\delta})\right].
\end{align*}
Using a similar approach as for $ \pi_n \tilde{R}_{S,n} - \pi \tilde{R}_S$, we can show that 
\[ \left\{E \left[\left\| \pi_n\tilde{F}_{S, n}-\pi\tilde{F}_S\right\|_{\infty, [a_{n,\delta}, b_{n, \delta}]}^2\right]\right\}^{1/2} = \boundeddet(n^{-1/2}).\]
Turning to $\tilde{F}_S(b_{n,\delta}) - \tilde{F}_S(a_{n,\delta})$, since $\tilde{F}_S(x) = \int_0^x F_U(t-) \, dF_S(t)$ and $F_S$ is continuously differentiable in a neighborhood of $x$, we have 
\[\tilde{F}_S(b_{n,\delta}) - \tilde{F}_S(a_{n,\delta}) = \int_{x-\delta n^{-1/3}}^{x+\delta n^{-1/3}} F_U(t-)\, dF_S(t) \leq F_S(x+\delta n^{-1/3}) - F_S(x-\delta n^{-1/3}) = \boundeddet{(\delta n^{-1/3})}\ .
\]
We conclude that $\left\{E\left[\left\| \tilde{F}_{S, n} - \tilde{F}_S \right\|_{TV, [a_n, b_n]}^2 \right]\right\}^{1/2} = \boundeddet{(\delta n^{-1/3} +n^{-1/2})}$.
Finally, since $\pi > 0$, we have $E|1/\pi_n - 1/\pi| = \boundeddet(n^{-1/2})$. Putting it together, we have 
\[
    E \left[n^{2/3}\sup_{|u| \leq \delta n^{-1/3}} |H_{x+u, n} - H_{x, n}|\right] = n^{2/3} \boundeddet{( n^{-1} + \delta n^{-5/6})} = \boundeddet{(n^{-1/3} + \delta n^{-1/6})}.
\]
This goes to zero for each $\delta > 0$, which verifies (B4), and (B5) is satisfied for any $\alpha \in (1,2)$.

{\bf Condition (A4).} For condition (A4), it suffices to show that $E [ \sup_{t\leq x +\delta}| \Lambda_{T,n}(t) - \Lambda_{T}(t)|] = \fasterthandet(n^{-1/3})$ for some $\delta > 0$. We define $\tilde{F}_T$, $\tilde{F}_{T,n}$, $\tilde{R}_T$, and $\tilde{R}_{T,n}$ as we did above for $S$, but with $A = 0$ in the conditionals instead. We then have
\begin{align*}
 | \Lambda_{T,n}(t) - \Lambda_{T}(t)| &= \left| \int_0^t \frac{d\tilde{F}_{T,n}}{\tilde{R}_{T,n}} - \int_0^t \frac{d\tilde{F}_{T}}{\tilde{R}_{T}} \right| \leq \left| \int_0^t \frac{d(\tilde{F}_{T,n} - \tilde{F}_{T})}{\tilde{R}_{T,n}} \right| +\left|\int_0^t \frac{\tilde{R}_T - \tilde{R}_{T,n}}{\tilde{R}_{T}\tilde{R}_{T,n}} \, d\tilde{F}_{T} \right| \\
 &\leq \left\| \tilde{F}_{T} - \tilde{F}_{T,n} \right\|_{\infty, [0,t]} \left[ 2 / \tilde{R}_{T,n}(t) - 1\right] + \left\| \frac{\tilde{R}_{T} - \tilde{R}_{T,n}}{\tilde{R}_T\tilde{R}_{T,n}} \right\|_{\infty, [0,t]}  \tilde{F}_T(t).
\end{align*}
We used integration by parts to bound the first term in the second inequality. By assumption, $\tilde{R}_T$ is bounded away from zero in a neighborhood of $x$, and as a result, $\tilde{R}_{T,n}$ is almost surely bounded away from zero in a neighborhood of $x$ for all $n$ large enough. Then, for some $\delta > 0$ and $C >0$, almost surely for all $n$ large enough it holds that
\begin{align*}
    E \left[ \sup_{t\leq x +\delta}| \Lambda_{T,n}(t) - \Lambda_{T}(t)| \right]  &\leq CE\left[\left\| \tilde{F}_{T} - \tilde{F}_{T,n} \right\|_{\infty, [0,\delta]}+ \left\|\tilde{R}_{T} - \tilde{R}_{T,n} \right\|_{\infty, [0,\delta]}\right].
\end{align*}
We can show that this expression is $\boundeddet(n^{-1/2})$ using similar empirical process techniques as we did with $S$ above.


{\bf Condition (A5).}  For this condition, since $I_n \subset [0,\gamma_n]$, it suffices to show that the stratified Nelson-Aalen estimators are uniformly consistent on $[0,\gamma_n]$, i.e.\ $\|\Lambda_{T,n} - \Lambda_{T}\|_{\infty, [0,\gamma_n]}$  and $\|\Lambda_{T,n} - \Lambda_{T}\|_{\infty, [0,\gamma_n]}$ tend to zero in probability. This follows from Corollary~1.2 of \cite{stute1994strong} by the assumed lower bound for $r_n$. 
\end{proof}
\end{adjustwidth}
\end{document}